\documentclass[a4paper,11pt]{article}

\usepackage{amsmath}
\usepackage{amsfonts}
\usepackage{amssymb}
\usepackage{caption}
\usepackage{graphicx}
\usepackage{graphics}
\usepackage{color}
\usepackage{ushort}
\usepackage{makeidx}
\usepackage{float}
\usepackage{appendix}
\usepackage{accents}
\usepackage{framed}
\usepackage{versions}
\usepackage{xcolor}
\usepackage{mathbbol}
\usepackage{blindtext}
\usepackage{emptypage}
\usepackage{verbatim}
\usepackage{enumitem}
\usepackage{array}
\usepackage{multirow}

\def\hmath$#1${\texorpdfstring{{\rmfamily\textit{#1}}}{#1}}

\usepackage[bookmarks]{hyperref}
\hypersetup{
    colorlinks=true,   	
    linkcolor=red,      
    citecolor = [rgb]{0 0.7 0},   	
    filecolor=magenta, 	
    urlcolor=blue
}





\newcommand{\Xp}{\mbox{\boldmath $X$}}

\newcommand{\Yp}{\mbox{\boldmath $Y$}}

\newcommand{\Zp}{\mbox{\boldmath $Z$}}

\newcommand{\RL}{{\mathbb R}}
\newcommand{\NN}{{\mathbb N}}

\newcommand{\IND}{{\mathbb I}}
\newcommand{\BBP}{{\mathbb P}}

\newcommand{\BBE}{{\mathbb E}}

\newcommand{\VAR}{\mbox{\rm Var}}

\def\ba{\begin{align}}
\def\ea{\end{align}}
\def\ban{\begin{align*}}
\def\ean{\end{align*}}

\def\be{\begin{eqnarray}}
\def\ee{\end{eqnarray}}
\def\ben{\begin{eqnarray*}}
\def\een{\end{eqnarray*}}

\def\bqq{\begin{equation}}
\def\eqq{\end{equation}}
\def\bqqn{\begin{equation*}}
\def\eqqn{\end{equation*}}






\def\elabel#1{\label{e:#1}}

%
 
\def\sq{$\Box$}

\def\qed{\ifmmode\sq\else{\unskip\nobreak\hfil
\penalty50\hskip1em\null\nobreak\hfil\sq
\parfillskip=0pt\finalhyphendemerits=0\endgraf}\fi\par\medbreak}


\newsavebox{\junk}
\savebox{\junk}[1.6mm]{\hbox{$|\!|\!|$}}

\def\limsup{\mathop{\rm lim\ sup}}













\def\til={{\widetilde =}}



\def\clX{{\cal X}}
\def\clY{{\cal Y}}

 \def\eq#1/{(\ref{#1})}



\newtheorem{theorem}{Theorem}[section]

\newtheorem{proposition}[theorem]{Proposition}
\newtheorem{lemma}[theorem]{Lemma}
\newtheorem{definition}[theorem]{Definition}

\def\eq#1/{(\ref{e:#1})}

\newcommand{\beqn}[1]{\notes{#1}%
\begin{eqnarray} \elabel{#1}}

\newcommand{\eeqn}{\end{eqnarray} } 

\newcommand{\beq}[1]{\notes{#1}%
\begin{equation}\elabel{#1}}

\newcommand{\eeq}{\end{equation}} 

\def\bdes{\begin{description}}
\def\edes{\end{description}}


\DeclareMathOperator*{\Bcup}{\text{\raisebox{0.25ex}%
	{\scalebox{0.8}{$\bigcup$}}}}


 

\def\notes#1{}




\definecolor{mag}{rgb}{0.7,0,0.3}
\definecolor{dgreen}{rgb}{0.1,0.5,0.1}
\definecolor{dred}{rgb}{.8,0,0}
\definecolor{gray}{rgb}{.8,.8,.8}
\definecolor{brown}{rgb}{0.6451,0.3706,0.1745}

\newcommand{\E}{\mathbb{E}}
\newcommand{\PP}{\mathbb{P}}
\newcommand{\Pbig}[1]{%
    \PP\bigl[ #1\bigr]%
}
\newcommand{\seq}[1]{%
    \{ #1\}%
}

\newenvironment{proof}{\paragraph{Proof. }}{\hfill$\square$}

\setlength{\oddsidemargin}{0.25cm}
\setlength{\evensidemargin}{0.25cm}
\setlength{\topmargin}{-.25cm}  
\setlength{\textheight}{22.1cm}
\setlength{\textwidth}{16cm}

\begin{document}

\title{\vspace{-1.5cm}%
Fundamental Limits of Lossless Data Compression\\
with Side Information}

\author
{
	Lampros Gavalakis
    \thanks{Department of Engineering,
	University of Cambridge,
        Trumpington Street,
	Cambridge CB2 1PZ, U.K.
                Email: \texttt{\href{mailto:lg560@cam.ac.uk}%
			{lg560@cam.ac.uk}}.
	L.G.\ was supported in part by EPSRC grant number RG94782.
        }
\and
        Ioannis Kontoyiannis 
    \thanks{Statistical Laboratory, DPMMS,
	University of Cambridge,
	Centre for Mathematical Sciences,
        Wilberforce Road,
	Cambridge CB3 0WB, U.K.
                Email: \texttt{\href{mailto:yiannis@maths.cam.ac.uk}%
			{yiannis@maths.cam.ac.uk}}.
		Web: \texttt{\url{http://www.dpmms.cam.ac.uk/person/ik355}}.
	I.K.\ was supported in part by the Hellenic Foundation for Research 
	and Innovation (H.F.R.I.) under the ``First Call for H.F.R.I. Research 
	Projects to support Faculty members and Researchers and the 
	procurement of high-cost research equipment grant,'' project 
	number 1034.
        }
}

\date{\today}

\maketitle

\begin{abstract}
The problem of lossless data compression 
with side information available to both the
encoder and the decoder is considered.
The finite-blocklength fundamental limits 
of the best achievable performance 
are defined, in two different versions
of the problem: {\em Reference-based compression},
when a single side information string
is used repeatedly in compressing different
source messages, and {\em pair-based compression},
where a different side information string
is used for each source message. 
General achievability and converse theorems
are established for arbitrary source-side
information pairs.
Nonasymptotic
normal approximation expansions are
proved for the optimal rate in both the reference-based
and pair-based settings,
for memoryless sources.
These are stated in terms of explicit,
finite-blocklength bounds, that are tight
up to third-order terms.
Extensions that go significantly beyond the class of memoryless
sources are obtained.
The relevant source dispersion is identified
and its relationship with the conditional 
varentropy rate is established.
Interestingly, the dispersion is different
in reference-based and pair-based compression,
and it is proved that the reference-based 
dispersion is in general smaller.
\end{abstract}

\noindent
{\small
{\bf Keywords --- } 
Entropy, lossless data compression,
side information,
conditional entropy, central limit theorem,
reference-based compression,
pair-based compression,
nonasymptotic bounds, conditional varentropy,
reference-based dispersion,
pair-based dispersion
}


\newpage

\section{Introduction}

It has long been recognised in information theory
\cite{slepianwolf:73,cover:book2} that the presence 
of correlated side information can
dramatically improve compression performance.
Moreover, in many applications useful side
information is naturally present. 

{\em Reference-based compression. } A particularly
important and timely application of compression with 
side information is to the problem of storing the vast 
amounts of genomic data currently being generated 
by modern DNA sequencing technology \cite{pavlichin:18}.
In a typical scenario, the genome~$X$ of 
a new individual that needs to be stored
is compressed using a reference genome~$Y$ 
as side information. Since most of the time
$X$ will only be a minor variation of $Y$,
the potential compression gains are large.
An important aspect of this scenario is that the
same side information -- in this case the
reference genome $Y$ -- is used 
in the compression of many
new sequences~$X^{(1)},X^{(2)},\ldots$. We call this 
the {\em reference-based}
version of the compression problem.

{\em Pair-based compression. } 
Another important application of compression with
side information is to 
the problem of file synchronisation \cite{rsync},
where updated computer
files need to be stored along with their earlier versions, 
and the related problem of 
software updates \cite{suel:02},
where remote users need to be provided with newer versions
of possibly large software suites. Unlike 
genomic compression,
in these cases a different side
information sequence $Y$ (namely, the older version of the 
specific file or of the particular software) 
is used every time a new piece of
data~$X$ is compressed. We refer to this as the
{\em pair-based} version of the compression problem,
since each time a different $(X,Y)$ pair is considered.
Other areas where pair-based
compression is used include, among many others,
the compression of noisy versions
of images \cite{pradhan:01}, and the 
compression of future video frames given 
earlier ones \cite{aaron:02}.

In addition to those appearing in work already 
mentioned above, a number practical algorithms 
for compression with side information
have been developed over the past 25 years. 
The following are a some representative 
examples.
The most common approach is based on 
generalisations of the 
Lempel-Ziv compression methods
\cite{subrahmanya:95,uyematsu:03,tock:05,jacob:08,jain-bansal}.
Information-theoretic treatments of problems
related to DNA compression with side information
have been given, e.g., in \cite{yang:01,fritz:11,chen:14}.
A grammar-based algorithm was presented in~\cite{stites:00},
turbo codes were used in~\cite{aaron:02b},
and a generalisation of the context-tree weighting
algorithm was developed in~\cite{C-K-Verdu:06}.

In this work
we describe and evaluate
the fundamental limits of the best achievable
compression performance, when side information
is available both at the encoder and the decoder.
We derive tight, nonasymptotic
bounds on the optimum rate, 
we determine the source dispersion in both
the reference-based and the pair-based
cases, and we examine the difference between
them.

\subsection{Outline of main results}

In Section~\ref{descriptionofanoptimal} we
give precise definitions for the finite-blocklength
fundamental limits of reference-based and pair-based 
compression. 
We identify the theoretically optimal 
one-to-one compressor in each case, 
for arbitrary source-side information pairs
$(\Xp,\Yp)$ $=\{(X_n,Y_n)\;;\;n\geq 1\}$.
Moreover,
in Theorem~\ref{prefixpenaltytheorem}
we show that,
for any blocklength $n$, requiring the compressor
to be prefix-free imposes a penalty of no more
than $1/n$~bits per symbol on the optimal rate.

In Section~\ref{arbitrarysources} we state and prove
four general, single-shot, achievability and converse results,
for the compression of arbitrary sources with arbitrarily
distributed side information. 
Theorem~\ref{t:achieve} generalises
a corresponding result established in~\cite{kontoyiannis-verdu:14}
without side information.
Theorem~\ref{lemmalength},
one of the main tools we use later
to derive
the normal approximation results,
gives new, tight
achievability and converse bounds,
based on a counting argument. 

Sections~\ref{memorylessnormalsection}
and~\ref{annealedsectionmemoryless} contain
our main results, giving nonasymptotic, 
normal-approximation expansions to the optimal 
reference-based rate and the optimal pair-based rate.
These expansions give finite-$n$ upper and lower bounds that
are tight up to third-order terms.

For the sake of clarity,
we first describe the pair-based results
of Section~\ref{annealedsectionmemoryless}.
The {\em conditional entropy rate} 
of a source-side information pair
$(\Xp,\Yp)$ $=\{(X_n,Y_n)\;;\;n\geq 1\}$ is,
$$H(\Xp|\Yp)=\lim_{n\to\infty}\frac{1}{n}H(X_1^n|Y_1^n),\
\qquad\mbox{bits/symbol},$$ 
whenever the limit exists,
where 
$X_1^n=(X_1,X_2,\ldots,X_n)$,
$Y_1^n=(Y_1,Y_2,\ldots,Y_n)$,
and $H(X_1^n|Y_1^n)$ denotes the conditional entropy
of $X_1^n$ given $Y_1^n$.
Similarly,
the {\em conditional varentropy rate} \cite{nomura:11} is,
$$
\sigma^2(\Xp|\Yp)=\lim_{n\to\infty}\frac{1}{n}
\VAR\big(-\log P(X_1^n|Y_1^n)\big),$$
whenever the limit exists,
where $\log=\log_2$.
This generalises the {\em minimal coding variance}
of \cite{kontoyiannis-97};
precise definitions will be given in the
following sections.

Let $R^*(n,\epsilon)$ be the best pair-based compression
rate that can be achieved
at blocklength $n$,
with excess rate probability no greater than
$\epsilon$.
For a memoryless source-side information pair $(\Xp,\Yp)$,
in Theorems~\ref{IIDAnnealedAchievability} and~\ref{IIDAnnealedConverse}
we show that
there are finite constants $C,C'>0$ such that,
for all $n$ greater than some $n_0$, 
\begin{equation}
\label{introannealedresult}
-\frac{C'}{n} \leq R^*(n,\epsilon) 
- \Bigl[H(X|Y) + \frac{1}{\sqrt{n}}\sigma(X|Y) Q^{-1}(\epsilon) 
-\frac{\log{n}}{2n}\Bigr] \leq \frac{C}{n}.
\end{equation}
Moreover, 
explicit expressions are obtained for $n_0,C,C'$. 
Here 
$Q$ denotes the standard Gaussian tail function,
$Q(z)=1-\Phi(z)$, $z\in\RL$;
for memoryless sources,
the conditional entropy rate
reduces to $H(X|Y)=H(X_1|Y_1)$, 
and the conditional
varentropy rate becomes 
$\sigma^2(X|Y)= \VAR\bigl(-\log{P(X_1|Y_1)}\bigr).$

The bounds in~(\ref{introannealedresult}) generalise
the corresponding no-side-information
results in Theorems~17 and~18 of \cite{kontoyiannis-verdu:14};
see also the discussion in Section~\ref{s:related} for
a description of the natural connection with the Slepian-Wolf problem.
Our proofs rely on the general coding theorems 
of Section~\ref{descriptionofanoptimal}
combined with appropriate versions of the classical
Berry-Ess\'{e}en bound. An important difference 
with~\cite{kontoyiannis-verdu:14}
is that the approximation used in the proof of the upper 
bound in \cite[Theorem~17]{kontoyiannis-verdu:14}
does not admit a natural analog in the case of compression
with side information. Instead, we use the tight approximation 
to the description lengths of the optimal compressor
given in Theorem~\ref{lemmalength}.
This is a new result, 
in fact an improvement of Theorem~\ref{t:achieve},
and it does not have a known counterpart in the
no-side-information setting.

Results analogous to~(\ref{introannealedresult}) are
also established 
in a slightly weaker form
for the case of Markov sources
in Theorem~\ref{Markovboth}, which is the main content 
of Section~\ref{normalmarkovsection}.

In Section~\ref{memorylessnormalsection} we consider
the corresponding problem in the reference-based setting.
Given an arbitrary, fixed side information string $y_1^n$,
let $R^*(n,\epsilon|y_1^n)$ denote the best pair-based
compression rate that can be achieved at blocklength $n$,
conditional on $Y_1^n=y_1^n$,
with excess-rate probability no greater than~$\epsilon$.
Suppose that the distribution of $\Yp$ is arbitrary,
and the source $\Xp$ is conditionally i.i.d.\ 
(independent and identically distributed) given $\Yp$.

In Theorems~\ref{quenchedconverse}
and~\ref{quenchedupper} we prove reference-based
analogs of the bounds in~(\ref{introannealedresult}):
There are finite constants $C(y_1^n), C'(y_1^n)>0$
such that, for all $n$ greater than some $n_1(y_1^n)$,
we have,
\be
-\frac{C'(y_1^n)}{n}
\leq
 	R^*(n,\epsilon|y_1^n) - \Bigl[H_n(X|y_1^n)
	+ \frac{1}{\sqrt{n}}\sigma_n(y_1^n)
	Q^{-1}(\epsilon) 
	-\frac{\log{n}}{2n}\Bigr]
\leq 
	\frac{C(y_1^n)}{n},
	\label{introfixedrnresult}
\ee
where now the first-order rate is given by,
$$
H_n(X|y_1^n)=
\frac{1}{n}\sum_{i=1}^n{H(X|Y=y_i)},$$
and the variance $\sigma_n^2(y_1^n)$ is,
$$
	\sigma_n^2(y_1^n)
=\frac{1}{n}\sum_{i=1}^n  \VAR\big(-\log P(X|y_i)\big|Y=y_i\big).$$
Once again, explicit expressions are obtained
for $n_1(y_1^n),C(y_1^n)$ and $C'(y_1^n)$.
A numerical example illustrating the accuracy of
the normal approximation in~(\ref{introfixedrnresult})
is shown in Figure~\ref{f:rates}.

\begin{figure}[t!]
\vspace*{-0.15in}
\centerline{\includegraphics[width=4.0in]{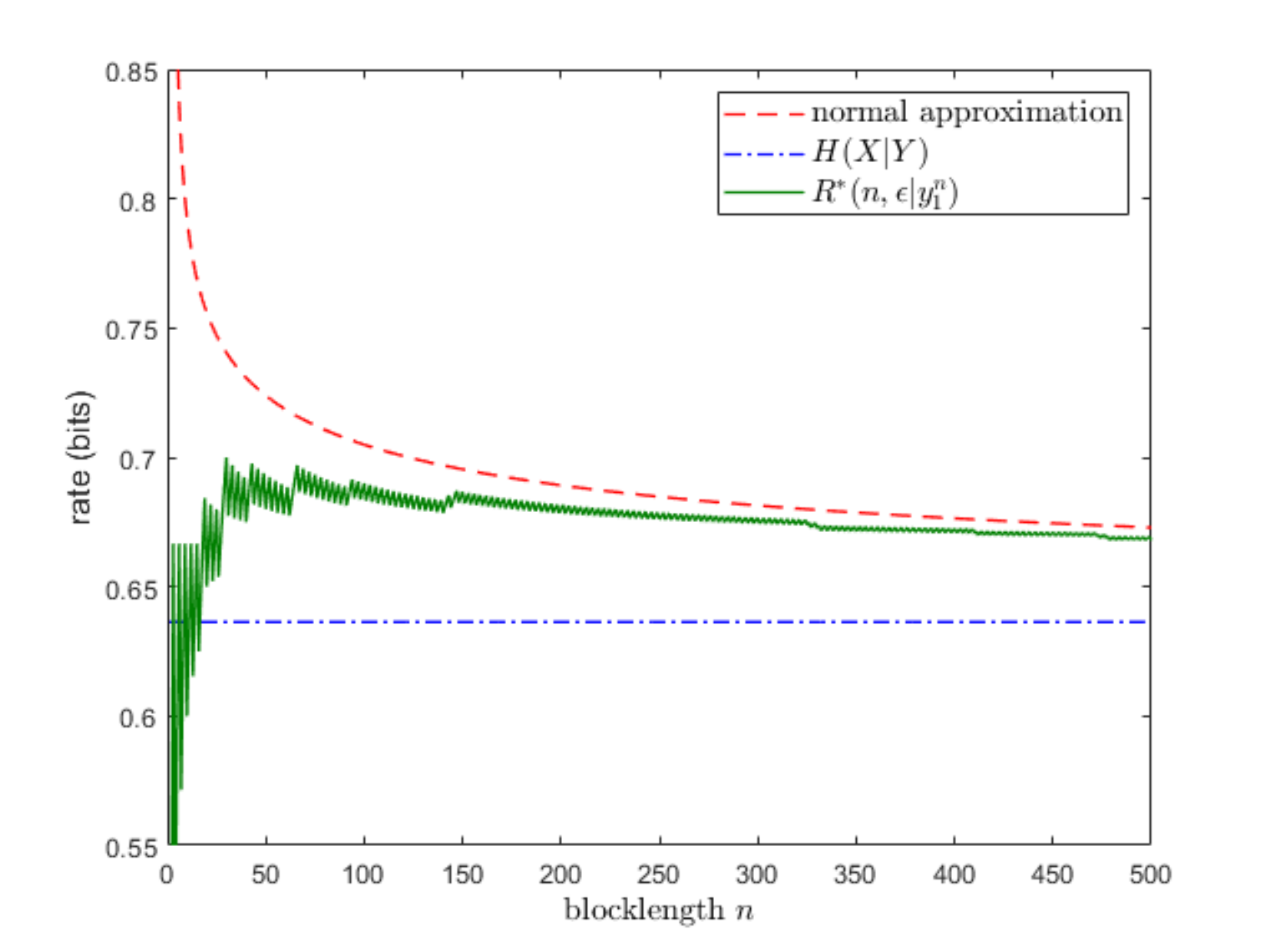}}
\vspace*{-0.05in}
\caption{Normal approximation to the 
reference-based optimal rate $R^*(n,\epsilon|y_1^n)$ 
for a memoryless side information process $\{Y_n\}$ with
Bern(1/3) distribution. The source
$\{X_n\}$ has
$X|Y=0\sim\mbox{Bern}(0.1)$ and
$X|Y=1\sim\mbox{Bern}(0.6)$. The conditional
entropy rate $H(X|Y)\approx 0.636$, whereas the
entropy rate of the source is $H(X)\approx0.837$.
The side information sequence is taken to be
$y_1^n=001001001\cdots$.
The graph shows $R^*(n,\epsilon|y_1^n)$ itself,
with $\epsilon=0.1$,
for blocklengths $1\leq n\leq 500$, together with the
normal approximation to 
$R^*(n,\epsilon|y_1^n)$ given by the three terms 
in square brackets
in~(\ref{introfixedrnresult}).}
\label{f:rates}
\end{figure}

Note that there is an elegant analogy between
the bounds in~(\ref{introannealedresult})
and~(\ref{introfixedrnresult}).
Indeed, there is further asymptotic
solidarity in the normal approximation
of the two cases. If $\Yp$ is ergodic,
then for a random side information string
$Y_1^n$ we have that, with probability~1,
$$H_n(X|Y_1^n)\to H(X|Y),
\qquad\mbox{as}\; n\to\infty.$$
But the corresponding variances
are different: With probability~1,
$$\sigma_n^2(Y_1^n)\to
\E\Big[\VAR\big(-\log P(X|Y)|Y\big)\Big],
\qquad\mbox{as}\; n\to\infty,$$
which is shown in Proposition~\ref{varentropyprop}~$(i)$
to be strictly smaller than $\sigma^2(X|Y)$ in general.
[The variance above is in terms of the conditional
distribution $P(X|Y)$, while the expectation
is with respect to~$Y$ only.]
This admits the intuitively satisfying interpretation
that, in reference-based compression, where a single side 
information string is used to compress multiple 
source messages, the optimal rate has smaller
variability.

\subsection{Related work}
\label{s:related}

The finer study of the optimal rate in source coding 
(without side information)
originated in Strassen's pioneering work~\cite{strassen:64b},
followed more than three decades later 
by~\cite{kontoyiannis-97} and more recently 
by~\cite{kontoyiannis-verdu:14}.
In addition to the works already described, 
we also mention that third-order normal approximation 
results in universal 
source coding were obtained in~\cite{irikosut:15}.

The dichotomy observed in the dispersion between
the reference-based and the pair-based settings
is in some ways analogous to the difference observed
in channel coding \cite{PPV:10}, between
the conditional information variance $V(P,W)$
and the unconditional variance $U(P,W)$,
in the context of
constant composition codes. 
In more recent work
\cite{sakai-arxiv:19}, a similar 
phenomenon was observed in source 
coding with side information, when examining 
the dispersion under maximum and average
error criteria.

The most direct relationship of the present 
development with current and recent work is 
in connection with the Slepian-Wolf (SW) problem~\cite{slepianwolf:73}.
A refined information spectrum-type bound for the SW converse
was derived by Miyake and Kanaya~\cite{miyake:95}.
Tan and Kosut~\cite{tan:12} give a second-order multidimensional 
normal approximation to the SW region 
for memoryless sources, and they show that,
up to terms of order $(\log n)/n$, achievable rates are
the same as if the side information
were known perfectly at the decoder.
Nomura and Han~\cite{nomura:14} derive the
second-order SW region for general sources.
Recently, Chen et al.~\cite{kostina:19} refined 
the results of~\cite{tan:12}
by establishing inner and outer asymptotic bounds 
for the SW region,
which are tight up to and including third-order terms.
Since, by definition, any SW code is also a pair-based code
for our setting, the achievability result
from \cite{kostina:19} implies a slightly weaker 
form of our Theorem~\ref{IIDAnnealedAchievability},
with an asymptotic $O(1/n)$ term in place of 
the explicit $C/n$ in~(\ref{eq:iidachieve}).
It is interesting to know that this high level
of accuracy in bounding above $R^*(n,\epsilon)$
can be derived both by random coding as in \cite{kostina:19}
and by deterministic methods as in Theorem~\ref{IIDAnnealedAchievability}.
The sharpest known SW converse is
obtained in \cite{jose:18} via 
a linear programming argument.

\section{The Optimal Compressor and Fundamental Limits}
\label{descriptionofanoptimal}

Let $(\Xp,\Yp)=\{(X_n,Y_n)\;;\;n\geq 1\}$ be a
{\em source-side information pair}, that is, a pair of arbitrary,
jointly distributed sources with finite alphabets
$\clX,\clY$, respectively, where $\Xp$ is to be compressed
and $\Yp$ is the side information process.
Given a source string
$x_1^n=(x_1,x_2,\ldots,x_n)$ and assuming
$y_1^n=(y_1,y_2,\ldots,y_n)$ is available to both the encoder
and decoder, a {\em fixed-to-variable one-to-one compressor 
with side information}, of blocklength $n$,
is a collection of functions $f_n$, where each
$f_n(x_1^n|y_1^n)$ takes a value in the set of all finite-length
binary strings,
$$\{0,1\}^*=\Bcup_{k=0}^\infty\{0,1\}^k=\{\emptyset,0,1,00,01,000,\ldots\},$$
with the convention that $\{0,1\}^0=\{\emptyset\}$ consists 
of just the empty string
$\emptyset$ of length zero. For each $y_1^n\in\clY^n$,
$f_n(\cdot|y_1^n)$ is assumed to be an injective function
from $\clX^n$ to $\{0,1\}^*$,
so that the compressed string $f_n(x_1^n|y_1^n)$ is always 
uniquely and correctly decodable. 
The associated description lengths of $\{f_n\}$ 
are,
$$\ell(f_n(x_1^n|y_1^n))
=\mbox{length of}\;f_n(x_1^n|y_1^n),\qquad\mbox{bits},$$
where, throughout, $\ell(s)$ denotes the length, in bits, of 
a binary string $s$.
For $\mbox{$1\leq i\leq j\leq\infty$}$, we use the shorthand 
notation $z_i^j$ for the string $(z_i,z_{i+1},\ldots,z_j)$,
and similarly $Z_i^j$ for the corresponding collection 
of random variables $Z_i^j=(Z_i,Z_{i+1},\ldots,Z_j)$.

The following fundamental limits describe the best achievable
performance among one-to-one compressors with side information,
in both the reference-based and the pair-based
versions of the problem, as described in the Introduction.

\begin{definition}[Reference-based optimal compression rate
\boldmath{$R^*(n,\epsilon|y_1^n)$}]
\label{quenchedrndef}
$\!\!\!$ For any blocklength $n$,
any fixed side information string $y_{1}^{n} \in \mathcal{Y}^n$,
and any $\epsilon\in[0,1)$,
we let $R^*(n,\epsilon | y_{1}^{n})$ denote the smallest 
compression rate that can be achieved with 
excess-rate probability no larger than $\epsilon$.
Formally, $R^*(n,\epsilon | y_{1}^{n})$ is the infimum
among all $R>0$ such that,
\ben
\min_{f_{n}(\cdot|y_{1}^{n})}{\mathbb{P}
\left[ \ell(f_{n}(X_{1}^{n}|y_{1}^{n})) > nR
|Y_1^n=y_1^n\right] \leq \epsilon},
\een
where the minimum is over all one-to-one compressors 
$f_n(\cdot|y_1^n):\clX^n\to\{0,1\}^*$.
\end{definition}

\noindent
Throughout, we write $\PP$ for the underlying
probability measure with respect to which all relevant probabilities
are computed, and the expectation operator $\E$ denotes 
integration with respect to~$\PP$.

\begin{definition}
[Pair-based optimal compression rate \boldmath{$R^*(n,\epsilon)$}]
\label{rnstardef}
$\;$ For any blocklength $n$
and any $\epsilon\in[0,1)$, we let $R^*(n,\epsilon)$
denote the smallest
compression rate that can be achieved with 
excess-rate probability, over both $X_1^n$ and $Y_1^n$,
no larger than $\epsilon$.
Formally, $R^*(n,\epsilon)$ is the infimum
among all $R>0$ such that,
\ben
\min_{f_{n}}{\mathbb{P}
\left[ \ell(f_{n}(X_{1}^{n}|Y_{1}^{n})) > nR\right]\leq\epsilon},
\een
where the minimum is over all one-to-one 
compressors $f_n$ with side information.
\end{definition}

\begin{definition}
[Reference-based excess-rate probability
\boldmath{$\epsilon^*(n,k|y_1^n)$}]
\label{condepsilonstardef}
$\;$ For any blocklength $n$, 
any fixed side information string $y_1^n\in\clY^n$, and any $k\geq 1$,
let $\epsilon^*(n,k|y_1^n)$ be the best achievable excess-rate probability
with rate $R=k/n$,
\ben
\epsilon^*(n,k|y_1^n) = 
\min_{f_n(\cdot|y_1^n)}\Pbig{\ell(f_n(X_1^n|y_1^n)) \geq k|Y_1^n=y_1^n},
\een
where the minimum is over all one-to-one compressors 
$f_n(\cdot|y_1^n):\clX^n\to\{0,1\}^*$.
\end{definition}

\begin{definition}
[Pair-based excess-rate probability
\boldmath{$\epsilon^*(n,k)$}]
\label{epsilonstardef}
For any blocklength $n$ and any $k\geq 1$,
let $\epsilon^*(n,k)$ be the best achievable excess-rate probability
with rate $R=k/n$,
\ben
\epsilon^*(n,k) = \min_{f_n}\Pbig{\ell(f_n(X_1^n|Y_1^n)) \geq k},
\een
where the minimum is over all one-to-one 
compressors $f_n$ with side information.
\end{definition}

\noindent
Before establishing detailed results on these fundamental limits
in the following sections, some remarks are in order.

\medskip

\noindent
{\bf The optimal compressor \boldmath{$f_n^*$}. }
It is easy to see from 
Definitions~\ref{quenchedrndef}--\ref{epsilonstardef}
that, in all four cases, the minimum is achieved by the same
simple compressor $f_n^*$: For each side information string $y_1^n$,
$f_n^*(\cdot|y_1^n)$ is the optimal compressor for 
the distribution $\BBP(X_1^n=\cdot|Y_1^n=y_1^n)$, namely, 
the compressor that orders the strings $x_1^n$
in order of decreasing probability $\BBP(X_1^n=x_1^n|Y_1^n=y_1^n)$, 
and assigns them codewords from $\{0,1\}^*$ in lexicographic order;
cf.~Property~1 in \cite{kontoyiannis-verdu:14}. 

\medskip

\noindent
{\bf Equivalence of minimal rate and excess-rate probability. }
The following relationships are straightforward from the definitions: 
For any $n,k\geq 1$,
any $\epsilon\in[0,1)$, and all $y_1^n\in\clY^n$:
\be
R^*(n,\epsilon|y_1^n) 
&=&
	\frac{k}{n} \qquad \text{iff}\qquad  \epsilon^*(n,k+1|y_1^n) 
	\leq \epsilon < \epsilon^*(n,k|y_1^n),
	\label{rnepsilonstar}\\
R^*(n,\epsilon) 
&=&
	\frac{k}{n}\qquad  \text{iff} \qquad \epsilon^*(n,k+1) 
	\leq \epsilon < \epsilon^*(n,k).
	\label{rnepsilonstar2}
\ee
Therefore, we can concentrate on determining the fundamental
limits in terms of the rate; corresponding results for 
the minimal excess-rate probability then follow 
from~(\ref{rnepsilonstar}) and~(\ref{rnepsilonstar2}).

\medskip

\noindent
{\bf Prefix-free compressors. }
Let $R_p^*(n,\epsilon|y_1^n)$,
$R_p^*(n,\epsilon)$,
$\epsilon_p^*(n,k|y_1^n)$
and $\epsilon_p^*(n,k)$ be the corresponding
fundamental limits as those in 
Definitions~\ref{quenchedrndef}--\ref{epsilonstardef},
when the compressors are required to be prefix-free.
As it turns out, the prefix-free condition imposes 
a penalty of at most $1/n$ on the rate.
 
\begin{theorem} 
\label{prefixpenaltytheorem}
\begin{enumerate}
\item[$(i)$] For all $n,k \geq 1$:
\begin{equation}
\epsilon^*_p(n,k+1) = \begin{cases}
    \epsilon^*(n,k), &k<n\log{|\mathcal{X}|}\\
    0, &k\geq n\log{|\mathcal{X}|}.
  \end{cases}
\label{eq:nona}
\end{equation}
\item[$(ii)$]
For all $n \geq 1$ and any 
$0 \leq \epsilon < 1$:
\ben
R^*(n,\epsilon)\leq 
R^*_p(n,\epsilon) \leq R^*(n,\epsilon) + \frac{1}{n}.
\een
\end{enumerate}
\end{theorem}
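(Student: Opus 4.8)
The plan is to reduce both parts to a single combinatorial identity for the optimal excess-rate probabilities, from which part $(ii)$ follows mechanically via the rate/excess-probability dictionary \eqref{rnepsilonstar2} and its prefix-free analogue. Throughout I would work conditionally on $Y_1^n=y_1^n$ and only average over $Y_1^n$ at the end.

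For part $(i)$, I would first record the two relevant optimal compressors. Conditionally on $Y_1^n=y_1^n$, the optimal one-to-one compressor $f_n^*(\cdot|y_1^n)$ lists the source strings in order of decreasing conditional probability $\PP(X_1^n=\cdot\mid Y_1^n=y_1^n)$ and assigns them $\emptyset,0,1,00,01,\dots$ in order; since there are exactly $1+2+\cdots+2^{k-1}=2^k-1$ binary strings of length at most $k-1$, the event $\{\ell(f_n^*(X_1^n|y_1^n))\ge k\}$ is precisely the event that $X_1^n$ is not among the $2^k-1$ most probable strings under $\PP(\cdot|y_1^n)$; averaging over $Y_1^n$ yields a closed form for $\epsilon^*(n,k)$. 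For prefix-free codes I would argue, again with $y_1^n$ fixed, that in any prefix-free code the codewords of length at most $k-1$ are distinct and themselves form a prefix-free code, so by Kraft's inequality there are at most $2^{k-1}$ of them, and in fact at most $2^{k-1}-1$ unless they already exhaust all of $\clX^n$ (if $2^{k-1}$ codewords have length $\le k-1$ then Kraft forces them all to have length exactly $k-1$ and to use up the entire Kraft budget, leaving no room for any longer codeword). Conversely $2^{k-1}-1$ short codewords are always attainable — assign the top-ranked strings the elements of $\{0,1\}^{k-1}$ other than the all-ones string, and place the remaining finitely many strings at a common sufficiently large depth below the all-ones string — and the excess-rate probability is minimised by giving the shortest codewords to the most probable strings. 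Hence, after averaging over $Y_1^n$, $\epsilon^*_p(n,k)$ equals the probability that $X_1^n$ is not among the $2^{k-1}-1$ most probable strings given $Y_1^n$ when $|\clX|^n>2^{k-1}$, and equals $0$ when $|\clX|^n\le 2^{k-1}$.

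Replacing $k$ by $k+1$ in the prefix-free formula and comparing with the one-to-one formula at parameter $k$ now gives $(i)$ at once: the condition $|\clX|^n>2^k$ is exactly $k<n\log|\clX|$, and in that regime ``not among the $2^{(k+1)-1}-1=2^k-1$ most probable strings'' is precisely the event defining $\epsilon^*(n,k)$, so $\epsilon^*_p(n,k+1)=\epsilon^*(n,k)$; while $|\clX|^n\le 2^k$ is exactly $k\ge n\log|\clX|$ and forces $\epsilon^*_p(n,k+1)=0$. In particular this shows $\epsilon^*_p(n,k+1)\le\epsilon^*(n,k)$ for every $k\ge1$. For part $(ii)$, the left inequality is immediate since every prefix-free compressor is one-to-one, so the set of feasible rates $R$ in the definition of $R^*_p(n,\epsilon)$ is contained in that for $R^*(n,\epsilon)$. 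For the right inequality, write $R^*(n,\epsilon)=k/n$; by \eqref{rnepsilonstar2} this means $\epsilon^*(n,k+1)\le\epsilon$, and then by the bound just noted $\epsilon^*_p(n,k+2)\le\epsilon^*(n,k+1)\le\epsilon$. Applying the prefix-free analogue of \eqref{rnepsilonstar2} (proved by the same elementary argument from the definitions) yields $R^*_p(n,\epsilon)\le(k+1)/n=R^*(n,\epsilon)+1/n$.

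The one genuinely delicate point is the prefix-free counting step in $(i)$: obtaining the \emph{sharp} bound $2^{k-1}-1$, rather than $2^{k-1}$, on the number of codewords of length at most $k-1$, which requires using Kraft's inequality to track the ``room'' needed for the remaining codewords and separating the boundary regime $|\clX|^n\le 2^{k-1}$. This $-1$ is exactly what produces the clean unit shift $\epsilon^*_p(n,k+1)=\epsilon^*(n,k)$, and hence the $1/n$ rate penalty in $(ii)$; the rest is bookkeeping with the dictionary \eqref{rnepsilonstar2}.
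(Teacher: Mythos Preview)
Your proposal is correct and follows essentially the same architecture as the paper: condition on $y_1^n$, establish the identity $\epsilon_p^*(n,k+1|y_1^n)=\epsilon^*(n,k|y_1^n)$ for the conditional source $P(\cdot|y_1^n)$, average over $Y_1^n$, and then read off part~$(ii)$ from the rate/excess-probability dictionary~\eqref{rnepsilonstar2} and its prefix-free analogue. The only difference is that the paper does not carry out the Kraft-based counting argument you give for the conditional step; it simply invokes \cite[Theorem~1]{kontoyiannis-verdu:14}, which is exactly that no-side-information identity. Your version is therefore a self-contained reproof of that cited result, and your observation that the sharp count $2^{k-1}-1$ (rather than $2^{k-1}$) is the crux --- with the boundary case $|\clX|^n\le 2^{k-1}$ handled separately --- is precisely what makes the unit shift in~$(i)$, and hence the $1/n$ penalty in~$(ii)$, exact.
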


Throughout the paper, `$\log$' denotes `$\log_2$', the logarithm taken
to base~2, and all familiar information theoretic quantities
are expressed in bits.

Note that,
for any fixed side information string $y_1^n$,
analogous results to those in Theorem~\ref{prefixpenaltytheorem}
hold for the reference-based
versions, $R_p^*(n,\epsilon|y_1^n)$ and $\epsilon^*_p(n,\epsilon|y_1^n)$,
as an immediate consequence of 
\cite[Theorem~1]{kontoyiannis-verdu:14} applied to the source with
distribution $\BBP(X_1^n=\cdot|Y_1^n=y_1^n)$.

\begin{proof}
For part $(i)$ note that, by the above remark, the 
reference-based analog of~(\ref{eq:nona})
in terms of 
$R_p^*(n,\epsilon|y_1^n)$ and $\epsilon^*_p(n,\epsilon|y_1^n)$
is an immediate consequence of 
\cite[Theorem~1]{kontoyiannis-verdu:14}.
Then,~(\ref{eq:nona}) follows directly by averaging
over all $y_1^n$.
The result of part~$(ii)$ follows directly from~$(i)$
upon noticing that
the analog of~(\ref{rnepsilonstar}) 
also holds for prefix-free codes:
$R_p^*(n,\epsilon) = \frac{k}{n}$
if and only if $\epsilon^*_p(n,k+1) \leq \epsilon < \epsilon^*_p(n,k).$
\end{proof}

\section{Direct and Converse Theorems for Arbitrary Sources}
\label{arbitrarysources}

In Section~\ref{s:general} we briefly describe generalisations
and extensions of the nonasymptotic coding theorems
in~\cite{kontoyiannis-verdu:14} to the case of compression
with side information. In Section~\ref{s:defns} we define
the core information-theoretic quantities that will be used
throughout the rest of the paper: The conditional information
density, and the conditional entropy and varentropy rates.

\subsection{General coding theorems}
\label{s:general}

Consider two arbitrary discrete random 
variables $(X,Y)$, with joint (PMF)
$P_{X,Y}$, taking values in $\clX$ and $\clY$, respectively.
For the sake of simplicity we may assume, without loss of generality,
that the source alphabet $\clX$ is the set of 
natural numbers $\clX=\mathbb{N}$, and that, for each $y\in\clY$, 
the values of $X$ are ordered with
nonincreasing conditional probabilities given $y$,
so that $\BBP(X=x|Y=y)$ is nonincreasing in $x$,
for each $y\in\clY$.

Let $f^*=f^*_1$ be the optimal compressor described in 
the last section, and write $P_X$ and
$P_{X|Y}$ for the PMF of $X$
and the conditional PMF
of $X$ given $Y$, respectively.  The ordering of the values 
of $X$ implies that, for all
$x\in\clX,y\in\clY$,
\be \ell(f^*(x|y))=\lfloor\log x\rfloor.
\label{eq:order}
\ee
 
The following 
is a general achievability result
that applies to both the reference-based
and the pair-based versions of the
compression problem:

\begin{theorem}
\label{t:achieve}
For all $x \in \mathcal{X}$, $y \in \mathcal{Y}$,
\begin{equation*}
\ell(f^*(x|y)) \leq -\log{P_{X|Y}(x|y)},
\end{equation*}
and for any $z\geq 0$,
$$\Pbig{\ell(f^*(X|Y)) \geq z} \leq \Pbig{-\log{P_{X|Y}(X|Y)} \geq z}.$$
\end{theorem}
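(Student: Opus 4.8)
The plan is to establish the pointwise inequality $\ell(f^*(x|y)) \leq -\log P_{X|Y}(x|y)$ first, since the probabilistic statement follows from it by a monotonicity argument. For the pointwise bound, I would fix $y \in \clY$ and work with the conditional PMF $P_{X|Y}(\cdot|y)$, which by the standing assumption is nonincreasing in $x \in \clX = \NN$. The key observation is that when the values of $X$ are ordered by nonincreasing conditional probability, a simple counting argument bounds the index: if $x$ is the position of a symbol in this ordering, then all $x$ symbols $1, 2, \ldots, x$ have conditional probability at least $P_{X|Y}(x|y)$, so $x \cdot P_{X|Y}(x|y) \leq \sum_{x'=1}^{x} P_{X|Y}(x'|y) \leq 1$, giving $x \leq 1/P_{X|Y}(x|y)$. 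Combining this with the exact description-length formula~\eqref{eq:order}, namely $\ell(f^*(x|y)) = \lfloor \log x \rfloor \leq \log x \leq -\log P_{X|Y}(x|y)$, yields the first claim.

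For the second statement, I would deduce it from the first by a coupling-free monotonicity argument. The event $\{\ell(f^*(X|Y)) \geq z\}$ is, for each realization $(x,y)$, contained in the event $\{-\log P_{X|Y}(x|y) \geq z\}$: indeed, by the pointwise inequality just proved, $\ell(f^*(x|y)) \geq z$ forces $-\log P_{X|Y}(x|y) \geq \ell(f^*(x|y)) \geq z$. Hence the indicator of the first event is dominated pointwise by the indicator of the second, and taking expectations under $\PP$ gives
\[
\Pbig{\ell(f^*(X|Y)) \geq z} \leq \Pbig{-\log P_{X|Y}(X|Y) \geq z}.
\]
This holds for every $z \geq 0$, as required.

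I do not anticipate a genuine obstacle here; the result is essentially a restatement in the side-information setting of the corresponding no-side-information fact (Property~1 and the achievability bound in~\cite{kontoyiannis-verdu:14}), obtained by conditioning on $Y = y$ throughout. The only point requiring a little care is the handling of the floor function and the base-2 logarithm — one must note $\lfloor \log x \rfloor \leq \log x$ — and ensuring the counting bound $x P_{X|Y}(x|y) \leq 1$ is applied with the correct convention that symbols are indexed starting from $1$, consistent with~\eqref{eq:order}. Edge cases such as $P_{X|Y}(x|y) = 0$ (which cannot occur for any $x$ actually realized under $\PP$) or the empty string $\emptyset$ at index $1$ (where $\ell(f^*(1|y)) = \lfloor \log 1 \rfloor = 0 \leq -\log P_{X|Y}(1|y)$ trivially) are immediate.
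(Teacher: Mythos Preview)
Your proposal is correct and follows essentially the same approach as the paper: the paper simply invokes \cite[Theorem~2]{kontoyiannis-verdu:14} applied to $P_{X|Y}(\cdot|y)$ for each fixed $y$, and notes that the second part follows directly from the first. You have spelled out the counting argument $x\,P_{X|Y}(x|y)\leq 1$ and the pointwise event inclusion explicitly, but this is exactly what the cited result amounts to in the present setting.
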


The first part is an immediate consequence of 
\cite[Theorem~2]{kontoyiannis-verdu:14}, applied
separately for each $y\in\clY$ to the optimal compressor
$f^*(\cdot|y)$ for the source with distribution
$P_{X|Y}(\cdot|y)$. The second part follows directly
from the first.

The next theorem gives a general converse result
for the pair-based compression problem: 

\begin{theorem} \label{generalconvSI}
For any integer $k\geq 0$:
$$\mathbb{P}\left[\ell(f^{*}(X|Y)) \geq k\right]\geq
\sup_{\tau > 0}\big\{\mathbb{P}\left[-\log{P_{X|Y}(X|Y) \geq k + \tau} \right] 
- 2^{-\tau}\big\}.$$
\end{theorem}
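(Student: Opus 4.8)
The plan is to establish the converse bound by a direct counting argument on the codeword lengths of the optimal compressor, combined with a union-bound–style splitting of the probability. Fix an integer $k \geq 0$ and a parameter $\tau > 0$. The starting point is the explicit form of the optimal codeword lengths from~(\ref{eq:order}): for each $y \in \clY$, ordering the values of $x$ by nonincreasing conditional probability, we have $\ell(f^*(x|y)) = \lfloor \log x \rfloor$. Hence the event $\{\ell(f^*(X|Y)) < k\}$ is exactly the event that, conditionally on $Y = y$, the value $X$ lies among the $2^k - 1$ most probable symbols (those with index $x \leq 2^k - 1$). The core observation is then a cardinality bound: for each fixed $y$, there are at most $2^k$ symbols $x$ with $\ell(f^*(x|y)) < k$, so the total conditional probability mass they can carry while also satisfying $-\log P_{X|Y}(x|y) \geq k + \tau$ — i.e.\ $P_{X|Y}(x|y) \leq 2^{-(k+\tau)}$ — is at most $2^k \cdot 2^{-(k+\tau)} = 2^{-\tau}$.

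Carrying this out, I would write, for each $y$,
\[
\Pbig{\ell(f^*(X|Y)) < k,\ -\log P_{X|Y}(X|Y) \geq k+\tau \,\big|\, Y=y}
\;=\!\!\sum_{\substack{x:\,\ell(f^*(x|y))<k \\ P_{X|Y}(x|y)\le 2^{-(k+\tau)}}}\!\! P_{X|Y}(x|y)
\;\le\; |\{x: \ell(f^*(x|y))<k\}| \cdot 2^{-(k+\tau)} \;\le\; 2^{-\tau},
\]
using $|\{x: \lfloor\log x\rfloor < k\}| = |\{x : x \le 2^k-1\}| = 2^k - 1 \le 2^k$. Averaging over $Y$ preserves the bound, so $\PP[\ell(f^*(X|Y)) < k,\ -\log P_{X|Y}(X|Y) \geq k+\tau] \leq 2^{-\tau}$. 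Then a simple set-inclusion / inclusion–exclusion step gives
\[
\PP[-\log P_{X|Y}(X|Y) \geq k+\tau]
\;\le\; \PP[\ell(f^*(X|Y)) \geq k] + \PP[\ell(f^*(X|Y)) < k,\ -\log P_{X|Y}(X|Y) \geq k+\tau]
\;\le\; \PP[\ell(f^*(X|Y)) \geq k] + 2^{-\tau}.
\]
Rearranging yields $\PP[\ell(f^*(X|Y)) \geq k] \geq \PP[-\log P_{X|Y}(X|Y) \geq k+\tau] - 2^{-\tau}$, and since $\tau > 0$ was arbitrary, taking the supremum over $\tau > 0$ completes the proof.

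The argument is essentially routine once the length formula~(\ref{eq:order}) is in hand; there is no serious obstacle. The one point that needs a little care is the conditioning: the length bound and the counting argument must be applied separately for each fixed side information value $y$ (where the compressor $f^*(\cdot|y)$ is a genuine one-to-one code for $P_{X|Y}(\cdot|y)$), and only then averaged over $Y$ — doing the counting ``jointly'' over $(x,y)$ would be wrong, since the number of $(x,y)$ pairs with short codewords is $|\clY| \cdot 2^k$, not $2^k$. A secondary, purely cosmetic point is the strict-versus-nonstrict inequality in $\{\ell \geq k\}$ versus $\{\ell < k\}$ and in $\{-\log P_{X|Y} \geq k+\tau\}$; since $k$ is an integer and $\ell$ is integer-valued, $\{\ell < k\} = \{\ell \leq k-1\}$ and everything is consistent. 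This is the natural side-information analog of the Barron-type / one-to-one converse in~\cite{kontoyiannis-verdu:14}, and indeed one could alternatively derive it by invoking \cite[Theorem~4]{kontoyiannis-verdu:14} for each $y$ and averaging, but the self-contained counting proof above is cleaner.
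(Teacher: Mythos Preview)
Your proof is correct and follows essentially the same approach as the paper: the paper also splits the event $\{-\log P_{X|Y}(X|Y)\ge k+\tau\}$ according to whether $X$ falls among the $2^k-1$ shortest codewords (your event $\{\ell<k\}$ is their set $\mathcal{C}=\{x\le 2^k-1\}$), bounds the intersection by the same counting argument $(2^k-1)\cdot 2^{-(k+\tau)}\le 2^{-\tau}$ applied per $y$ and averaged, and identifies the complement with $\{\ell(f^*(X|Y))\ge k\}$ via~(\ref{eq:order}). The only cosmetic difference is that the paper works with the sets $\mathcal{L},\mathcal{C}$ on the joint space from the start rather than conditioning and then averaging, but the two presentations are line-for-line equivalent.
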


\begin{proof}
Let $k\geq 0$ and $\tau > 0$ be arbitrary,
and define,
\ben
\mathcal{L} &=&
 \{(x,y) \in \mathcal{X}\times\mathcal{Y}: P_{X|Y}(x|y) \leq 2^{-k-\tau}\} \\
\mathcal{C} &=&
 \{(x,y) \in \mathcal{X}\times\mathcal{Y}: x \in \{1,2,\ldots,2^k-1\}\}.
\een
Then, 
\ben
\Pbig{-\log{P_{X|Y}(X|Y)} \geq k + \tau} 
&=& 
	P_{X,Y}(\mathcal{L}) \\
&=& 
	P_{X,Y}(\mathcal{L}\cap\mathcal{C}) + 
	P_{X,Y}(\mathcal{L}\cap\mathcal{C}^c) \\
&\leq& 
	P_{X,Y}(\mathcal{L}\cap\mathcal{C}) + P_{X,Y}(\mathcal{C}^c) \\
&\leq&
	\sum_{y \in \mathcal{Y}}{P_Y(y)\bigl((2^k-1)2^{-k-\tau}\bigr)} 
	+ \Pbig{\lfloor \log X\rfloor \geq k} \\
&\leq& 
	2^{-\tau}  + \Pbig{\ell(f^*(X|Y)) \geq k},
\een
where the last inequality follows from~(\ref{eq:order}).
\end{proof}

Our next result is one of the main tools in the proofs of the 
achievability results in the 
normal approximation bounds for $R^*(n,\epsilon|y_1^n)$
and $R^*(n,\epsilon)$
in Sections~\ref{memorylessnormalsection}
and~\ref{annealedsectionmemoryless}. It gives
tight upper and lower bounds on the performance of the
optimal compressor, that are useful in
both the reference-based and the pair-based setting:

\begin{theorem} 
\label{lemmalength}
For all $x,y$,
\be 
\ell(f^*(x|y)) 
&\geq& \log\left(\E\left[\left.\frac{1}{P_{X|Y}(X|y)}
\mathbb{I}_{ \{ P_{X|Y}(X|y) > P_{X|Y}(x|y)\}}\right|Y=y\right]\right) - 1,
	\label{eq:exact1}\\
 \ell(f^*(x|y)) 
&\leq& \log\left(\E\left[\left.\frac{1}{P_{X|Y}(X|y)}
\mathbb{I}_{ \{ P_{X|Y}(X|y) \geq P_{X|Y}(x|y)\}}\right|Y=y\right] \right),
\label{eq:exact2}
\ee
where $\IND_A$ denotes the indicator function of an event $A$,
with $\IND_A=1$ when $A$ occurs and $\IND_A=0$ otherwise.
\end{theorem}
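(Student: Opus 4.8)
The key observation is that the optimal compressor $f^*(\cdot|y)$ orders the source symbols by nonincreasing conditional probability $P_{X|Y}(\cdot|y)$ and assigns codewords in lexicographic order, so by~(\ref{eq:order}) we have $\ell(f^*(x|y)) = \lfloor \log x \rfloor$, where here $x$ denotes the rank of the symbol in this ordering. So the plan is to express the rank $x$ of a given symbol in terms of the conditional distribution and then estimate $\lfloor \log x \rfloor$ from above and below.

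First I would fix $y$ and define, for a given symbol $x$ (identified with its rank), the ``upper level set'' $S_{\geq}(x) = \{x' : P_{X|Y}(x'|y) \geq P_{X|Y}(x|y)\}$ and the ``strict upper level set'' $S_{>}(x) = \{x' : P_{X|Y}(x'|y) > P_{X|Y}(x|y)\}$. Because the symbols are ordered by nonincreasing conditional probability, the rank $x$ satisfies $|S_{>}(x)| + 1 \leq x \leq |S_{\geq}(x)|$; in fact $x$ lies in the block of symbols all having conditional probability exactly $P_{X|Y}(x|y)$, and that block occupies ranks $|S_{>}(x)|+1$ through $|S_{\geq}(x)|$. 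The next step is to bound these cardinalities by the conditional expectations appearing in the statement: for any set $S$ of symbols, $|S| = \sum_{x' \in S} 1 \leq \sum_{x' \in S} \frac{P_{X|Y}(x'|y)}{P_{X|Y}(x|y)} \cdot \mathbb{I}_{\{x' \in S\}} \cdot \frac{1}{P_{X|Y}(x'|y)} P_{X|Y}(x'|y)$ --- more cleanly, $\E\left[\frac{1}{P_{X|Y}(X|y)} \mathbb{I}_{\{X \in S\}} \,\middle|\, Y = y\right] = \sum_{x' \in S} 1 = |S|$ whenever $S$ is a set on which $P_{X|Y}(\cdot|y) > 0$, since each term $\frac{1}{P_{X|Y}(x'|y)} \cdot P_{X|Y}(x'|y) = 1$. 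Thus the two expectations in~(\ref{eq:exact1}) and~(\ref{eq:exact2}) are \emph{exactly} $|S_{>}(x)|$ and $|S_{\geq}(x)|$ respectively (restricting to symbols of positive probability, which is automatic since $x$ itself has positive probability and $S_{>}(x), S_{\geq}(x)$ contain only symbols of probability at least that of, or exceeding that of, $x$).

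Given this identification, the proof reduces to pure arithmetic. For~(\ref{eq:exact2}): since $x \leq |S_{\geq}(x)|$, we get $\ell(f^*(x|y)) = \lfloor \log x \rfloor \leq \log x \leq \log |S_{\geq}(x)| = \log \E\left[\frac{1}{P_{X|Y}(X|y)} \mathbb{I}_{\{P_{X|Y}(X|y) \geq P_{X|Y}(x|y)\}} \,\middle|\, Y=y\right]$. For~(\ref{eq:exact1}): since $x \geq |S_{>}(x)| + 1$, we have $\lfloor \log x \rfloor \geq \lfloor \log(|S_{>}(x)|+1) \rfloor \geq \log(|S_{>}(x)|+1) - 1 \geq \log |S_{>}(x)| - 1 = \log \E\left[\frac{1}{P_{X|Y}(X|y)} \mathbb{I}_{\{P_{X|Y}(X|y) > P_{X|Y}(x|y)\}} \,\middle|\, Y=y\right] - 1$, using $\lfloor \log m \rfloor > \log m - 1$ for any positive integer $m$ and the monotonicity of $\log$. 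One minor care point: if $S_{>}(x)$ is empty, the expectation is $0$ and $\log 0 - 1 = -\infty$, so the lower bound~(\ref{eq:exact1}) holds trivially in that degenerate case; I would note this explicitly.

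The main obstacle --- really the only subtlety --- is being careful about ties in the conditional probabilities and about how the lexicographic codeword assignment interacts with the floor function, i.e., making sure the rank bounds $|S_{>}(x)| + 1 \leq x \leq |S_{\geq}(x)|$ are correctly justified and that the $-1$ in~(\ref{eq:exact1}) genuinely absorbs both the floor-function loss and the shift from $|S_{>}(x)|+1$ to $|S_{>}(x)|$. Everything else is a direct consequence of~(\ref{eq:order}) and the trivial identity that summing $1$ over a finite set equals its cardinality, rewritten as a conditional expectation.
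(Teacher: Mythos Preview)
Your proposal is correct and follows essentially the same approach as the paper: identify the expectations as the cardinalities $|S_{>}(x)|$ and $|S_{\geq}(x)|$ via the trivial identity $\sum_{x'\in S}1=\E[\frac{1}{P_{X|Y}(X|y)}\mathbb{I}_{\{X\in S\}}\mid Y=y]$, sandwich the rank of $x$ between them, and use the bounds $a-1\leq\lfloor a\rfloor\leq a$ on the floor function together with~(\ref{eq:order}). Your treatment is in fact slightly more careful than the paper's in explicitly handling the degenerate case $S_{>}(x)=\emptyset$ and in distinguishing $|S_{>}(x)|+1$ from $|S_{>}(x)|$, but the argument is the same.
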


\begin{proof}
Recall from~(\ref{eq:order}) that, for any
$k\in\NN,y\in\clY$, we have
$\ell(f^*(k|y))= \lfloor \log k \rfloor$.
In other words, for any $y$, the optimal description length
of the $k$th most likely value of $X$ according to
$P_{X|Y}(\cdot|y)$, is $\lfloor \log k \rfloor$ bits.
Although there may be more than one optimal ordering 
of the values of $X$ when there are ties, 
it is always the case that (given $y$)
the position of $x$ is 
between the number of values that have probability 
strictly larger than $P_{X|Y}(x|y)$ and the number of 
outcomes that have probability $\geq P_{X|Y}(x|y)$.
Formally,
$$\left\lfloor\log{\left( \sum_{x' \in \mathcal{X}}
{\mathbb{I}_{ \{ P_{X|Y}(x'|y) > P_{X|Y}(x|y)\}}} \right)}\right\rfloor 
\leq \ell(f^*(x|y) ) \leq 
\left\lfloor\log{\left( \sum_{x' \in \mathcal{X}}
{\mathbb{I}_{ \{ P_{X|Y}(x'|y) \geq P_{X|Y}(x|y)\}}} \right)} \right\rfloor.
$$
Multiplying and dividing each summand above
by $P_{X|Y}(x|y)$, the two sums are equal to the
expectations in~(\ref{eq:exact1})
and~(\ref{eq:exact2}), respectively.
The result follows from the trivial bounds on the floor function,
$a\geq\lfloor a\rfloor\geq a-1$.
\end{proof}

\subsection{Conditional entropy and varentropy rate}
\label{s:defns}

Suppose $(\Xp,\Yp)=\{(X_n,Y_n)\;;\;n\geq 1\}$ is an arbitrary
source-side information pair,
with values in the finite alphabets
$\clX,\clY$. 

\begin{definition}[Conditional information density] $\;$
For any source-side information pair $(\Xp,\Yp)$, the
{\em conditional information density} of blocklength~$n$
is the random variable,
$$-\log P(X_1^n|Y_1^n)=-\log P_{X_1^n|Y_1^n}(X_1^n|Y_1^n).$$
\end{definition}

\noindent
When it causes no confusion, we drop the 
subscripts for PMFs and conditional PMFs,
e.g., simply writing $P(x_1^n|y_1^n)$ for 
$P_{X_1^n|Y_1^n}(x_1^n|y_1^n)$ 
above.
Throughout, $H(Z)$ and $H(Z|W)$ denote the discrete entropy
of $Z$ and the conditional entropy of $Z$ given $W$,
in bits.

\begin{definition}[Conditional entropy rate]
For any source-side information pair $(\Xp,\Yp)$, the
{\em conditional entropy rate} $H(\Xp|\Yp)$ is defined
as:
$$
H(\Xp|\Yp)=\limsup_{n\to\infty}\frac{1}{n}H(X_1^n|Y_1^n).$$
\end{definition}

\noindent
If $(\Xp,\Yp)$ are jointly stationary, then the above 
$\limsup$ is in fact a limit,
and it is equal to $H(\Xp,\Yp)-H(\Yp)$,
where $H(\Xp,\Yp)$ and $H(\Yp)$
are the entropy rates of $(\Xp,\Yp)$ and
of $\Yp$, respectively \cite{cover:book2}.

\begin{definition}[Conditional varentropy rate]
\label{asymptoticvariance}
\hspace{-0.11in}
For any source-side information pair $(\Xp,\!\Yp)$, 
the {\em conditional varentropy rate} is:
\begin{equation}
\label{variancedef}
\sigma^2(\Xp|\Yp) 
= \limsup_{n \rightarrow \infty}{\frac{1}{n}
\VAR{\left(-\log{P(X_1^n|Y_1^n)}\right)}}.
\end{equation}
\end{definition}

\medskip

\noindent
As with the conditional entropy rate,
under additional assumptions the $\limsup$ in~(\ref{variancedef})
is in fact a limit. Lemma~\ref{lem:varentropy} is proved in 
Appendix~\ref{app:Mvar}.

\begin{lemma}
\label{lem:varentropy}
If the source-side information pair $(\Xp,\Yp)$
and the side information process $\Yp$,
are irreducible and aperiodic, $d$th order Markov chains,
then,
$$H(\Xp|\Yp)=\lim_{n\to\infty}\frac{1}{n}H(X_1^n|Y_1^n),$$
and
\begin{equation*}
\sigma^2(\Xp|\Yp) = \lim_{n \rightarrow \infty}{\frac{1}{n}
\VAR{\left(-\log{P(X_1^n|Y_1^n)}\right)}}.
\end{equation*}
In particular, the limits 
$H(\Xp|\Yp)$ and $\sigma^2(\Xp|\Yp)$
exist and they are independent
of the initial distribution of the chain $(\Xp,\Yp)$.
\end{lemma}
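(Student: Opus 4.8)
The plan is to reduce the computation of $H(\Xp|\Yp)$ and $\sigma^2(\Xp|\Yp)$ to the additive-functional theory for the Markov chain $(\Xp,\Yp)$, so that the existence of the limits follows from standard central limit theory for Markov chains. First, I would note that since $(\Xp,\Yp)$ and $\Yp$ are both $d$th order Markov, we may reduce to the case $d=1$ by the usual trick of lifting to blocks of $d$ consecutive symbols; the lifted chain on $(\clX\times\clY)^d$ is again irreducible and aperiodic, which keeps the argument clean. Next, observe the chain-rule decomposition
\begin{equation*}
-\log P(X_1^n|Y_1^n)
= -\log P(X_1^n,Y_1^n) + \log P(Y_1^n),
\end{equation*}
which expresses the conditional information density as a difference of two (unconditional) information densities, one for the joint chain $(\Xp,\Yp)$ and one for the marginal chain $\Yp$. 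Each of these is, up to boundary terms coming from the initial distribution, a Birkhoff sum of a bounded function of consecutive states of the relevant chain: writing $g(x,y,x',y') = -\log P_{X_2,Y_2|X_1,Y_1}(x',y'|x,y)$ and analogously $h(y,y')$ for $\Yp$, one has $-\log P(X_1^n,Y_1^n) = -\log P(X_1,Y_1) + \sum_{i=1}^{n-1} g(X_i,Y_i,X_{i+1},Y_{i+1})$ and similarly for the $\Yp$-term. These functions are bounded because the alphabets are finite and irreducibility forces all the relevant one-step transition probabilities that actually occur to be bounded away from $0$.

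With this in hand, the first-order claim $\frac1n H(X_1^n|Y_1^n)\to H(\Xp|\Yp)$ is just the ergodic theorem in expectation: $\frac1n\E[-\log P(X_1^n,Y_1^n)]\to \pi_{XY}(g)$ and $\frac1n\E[-\log P(Y_1^n)]\to \pi_Y(h)$, where $\pi_{XY},\pi_Y$ are the respective stationary distributions, and the initial-term contributions vanish in the limit; this also shows independence from the initial distribution. For the varentropy, I would apply the Markov-chain central limit theorem / the theory of asymptotic variance for additive functionals (e.g.\ via the Poisson equation or the spectral representation): for an irreducible aperiodic finite-state chain and a bounded functional of consecutive states, $\frac1n\VAR\big(\sum_{i} G(Z_i,Z_{i+1})\big)$ converges to a finite limit $\sigma_G^2$, independent of the initial distribution. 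The only subtlety is that here we need the variance of the \emph{sum} $S_n := -\log P(X_1^n|Y_1^n)$, which is itself a difference of two additive functionals living on two different (but coupled) chains. I would handle this by viewing $S_n$ as a single additive functional of the joint chain $(X_i,Y_i)$ — namely $S_n = \sum_{i=1}^{n-1}\big[g(X_i,Y_i,X_{i+1},Y_{i+1}) - h(Y_i,Y_{i+1})\big]$ plus a bounded initial term — and then the one–functional CLT for the joint chain applies directly, giving convergence of $\frac1n\VAR(S_n)$ to a finite constant. The bounded initial term changes $\VAR(S_n)$ by at most an $O(1)$ amount and hence does not affect the $\frac1n$-limit, and it is also why the limit does not depend on the initial law.

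The main obstacle is the bookkeeping around the initial/boundary terms and, more substantively, making sure the relevant conditional-transition functionals are genuinely bounded. The point is that $-\log P_{X_2,Y_2|X_1,Y_1}(x',y'|x,y)$ is only defined (and finite) on pairs $(x,y)\to(x',y')$ that are reachable; on the support of the stationary chain these probabilities are bounded below by some $\delta>0$ by finiteness and irreducibility, so the functional is bounded there, and the process never leaves that support once started in it — but if the initial distribution is arbitrary one must argue that after one step the chain is supported on the recurrent class, so the functionals encountered for $i\ge 2$ are bounded, and the single $i=1$ term $-\log P(X_1,Y_1)$ contributes only $O(1)$ to both mean and variance. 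Once boundedness is secured, everything else is an appeal to the standard ergodic theorem and Markov CLT machinery; I would cite a textbook treatment (e.g.\ the functional CLT for finite Markov chains) rather than reprove it. A short separate remark would record that the block-lifting reduction from $d$th order to first order preserves irreducibility and aperiodicity, which is where that hypothesis is used.
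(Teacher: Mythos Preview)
Your proposal is correct and follows essentially the same route as the paper: the paper writes $-\log P(X_1^n|Y_1^n)=\sum_{j=1}^n f(X_j^{j+d},Y_j^{j+d})+\Delta_n$ with $f(x_1^{d+1},y_1^{d+1})=\log\frac{P(y_{d+1}|y_1^d)}{P(x_{d+1},y_{d+1}|x_1^d,y_1^d)}$ (which is exactly your $g-h$) and a uniformly bounded remainder $\Delta_n$, lifts to the first-order block chain $Z_n=((X,Y)_n,\ldots,(X,Y)_{n+d})$, and then invokes the $L^1$ and $L^2$ ergodic theorems from Chung for the mean and variance limits. The only cosmetic differences are that the paper arranges the boundary term so the sum has exactly $n$ summands (using indices up to $n+d$) rather than $n-1$, and it cites Chung's ergodic theorems where you cite the Markov-chain CLT/Poisson-equation machinery for the asymptotic variance; your concern about transient initial states is moot here since irreducibility on a finite state space already forces all states to be recurrent.
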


\section{Normal Approximation for Reference-Based Compression}
\label{memorylessnormalsection}

In this section we give explicit, finite-$n$ bounds
on the reference-based optimal rate $R^*(n,\epsilon|y_1^n)$.
Suppose the source and side information, $(\Xp,\Yp)$,
consist of independent
and identically distributed (i.i.d.) pairs $\{(X_n,Y_n)\}$,
or, more generally, that $(\Xp,\Yp)$ is a {\em conditionally-i.i.d.\
source-side information pair}, i.e., that the distribution 
of $\Yp$ is arbitrary, and for each $n$, given $Y_1^n=y_1^n$,
the random variables $X_1^n$ are conditionally i.i.d.,
$$\BBP(X_1^n=x_1^n|Y_1^n=y_1^n)=\prod_{i=1}^nP_{X|Y}(x_i|y_i),
\qquad x_1^n\in\clX^n,y_1^n\in\clY^n,$$
for a given family of conditional
PMFs $P_{X|Y}(\cdot|\cdot)$.

We will use the following notation. 
For any $y\in\clY$, we write,
$$H(X|y)=-\sum_{x\in\clX} P_{X|Y}(x|y)\log P_{X|Y}(x|y),$$ 
for the entropy of the conditional distribution of $X$
given $Y=y$, and,
\begin{equation}
V(y) = \VAR[-\log{P_{X|Y}(X|y)}|Y=y].
\label{eq:Vdef}
\end{equation}
For a side information string $y_1^n\in\clY^n$,
we denote,
 \be 
	H_n(X|y_1^n)&=&\frac{1}{n}\sum_{j=1}^nH(X|y_j)
	\label{eq:Hndef}\\
	\sigma^2_n(y_1^n) &=& \frac{1}{n}\sum_{j=1}^n{V(y_j)}.
	\label{eq:sndef}
\ee

The upper and lower bounds developed in 
Theorems~\ref{quenchedupper}
and~\ref{quenchedconverse}
below say that, for any conditionally-i.i.d.\
source-side information pair
$(\Xp,\Yp)$ and any side information 
string $y_1^n$, the reference-based
optimal compression rate,
\be
R^*(n,\epsilon|y_1^n)= H_n(X|y_1^n)
+\frac{\sigma_n(y_1^n)}{\sqrt{n}}
Q^{-1}(\epsilon)-\frac{\log n}{2n} +O\Big(\frac{1}{n}\Big),
\label{eq:rateapprox}
\ee
with explicit bounds on the $O(1/n)$ term,
where $Q$ denotes the standard Gaussian 
tail function 
$Q(x)=\frac{1}{\sqrt{2\pi}}\int_x^\infty e^{-z^2/2}dz$.

As described in the Introduction, 
$R^*(n,\epsilon|y_1^n)$ is the best achievable
rate with excess-rate probability no more than
$\epsilon$, with respect to a {\em fixed} side 
information string $y_1^n$. 

\subsection{Preliminaries}
\label{s:prelim}

Suppose for now that $(\Xp,\Yp)=\{(X_n,Y_n)\}$ is
an i.i.d.\ source-side information pair, with
all $(X_n,Y_n)$ distributed as $(X,Y)$, 
with joint PMF $P_{X,Y}$ on
$\clX\times\clY$. 
In this case, the conditional entropy 
rate $H(\Xp|\Yp)$ is simply $H(X|Y)$ and
the conditional varentropy rate~(\ref{variancedef}) 
reduces to the \textit{conditional varentropy} of 
$X$ given $Y$,
\begin{equation*}
\sigma^2(X|Y) = \VAR{\bigl[-\log{P_{X|Y}(X|Y)}\bigr]},
\end{equation*}
where $P_{X|Y}$ denotes the conditional
PMF of $X_n$ given $Y_n$.
As in earlier sections, we will drop the subscripts of
PMFs when they can be understood unambiguously from the context. 

First we state some simple properties for 
the conditional varentropy. We write $\hat{H}_X(Y)$ 
for the random variable,
\be
\hat{H}_X(Y)=-\sum_{x\in\clX} 
P_{X|Y}(x|Y)\log P_{X|Y}(x|Y).
\label{eq:Hhatdef}
\ee

\begin{proposition}
\label{varentropyprop} 
Suppose $(\Xp,\Yp)$ is an i.i.d.\ source-side 
information pair, with each $(X_n,Y_n)\sim(X,Y)$. Then:
\begin{enumerate}
\item[$(i)$]
\label{varentropypropfirst}
The conditional varentropy can also be expressed:
\begin{equation*} 
\sigma^2(X|Y) = \BBE[V(Y)]
 + \VAR[\hat{H}_X(Y)].
\end{equation*}
\item[$(ii)$]
$\BBE[V(Y)]=0$ if and only if, for each $y \in \mathcal{Y}$, 
$P_{X|Y}(x|y)$ is uniform on a (possibly singleton)
subset of $\mathcal{X}$.
\item[$(iii)$]
$\sigma^2(X|Y) = 0$
if and only if 
there exists $k \in \{1,2,\ldots,|\mathcal{X}|\}$, 
such that, for each $y \in \mathcal{Y}$, 
$P_{X|Y}(x|y)$ is uniform on a 
subset $\clX_y\subset\mathcal{X}$
of size $|\clX_y|=k$.
\end{enumerate}
\end{proposition}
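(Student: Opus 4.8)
The plan is to work entirely with the conditional variance identity for $-\log P_{X|Y}(X|Y)$, decomposing the "randomness" into the part coming from $X$ (given $Y$) and the part coming from $Y$. Write $\iota(X,Y)=-\log P_{X|Y}(X|Y)$. The key observation is that, conditionally on $Y=y$, the expectation of $\iota$ is exactly $\hat H_X(y)=H(X|Y=y)$, and the conditional variance is $V(y)$ as in~(\ref{eq:Vdef}). Therefore, by the law of total variance,
\[
\sigma^2(X|Y)=\VAR\bigl(\iota(X,Y)\bigr)
=\E\bigl[\VAR(\iota(X,Y)\mid Y)\bigr]+\VAR\bigl(\E[\iota(X,Y)\mid Y]\bigr)
=\E[V(Y)]+\VAR[\hat H_X(Y)].
\]
This is part~$(i)$; it is essentially immediate once the conditional moments are identified, so the only thing to be careful about is checking that $\E[\iota\mid Y=y]=\hat H_X(y)$ and $\VAR(\iota\mid Y=y)=V(y)$ directly from the definitions~(\ref{eq:Vdef}) and~(\ref{eq:Hhatdef}).

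**Part $(ii)$.** Since $V(y)\geq 0$ for every $y$ and $\E[V(Y)]=\sum_y P_Y(y)V(y)$, we have $\E[V(Y)]=0$ if and only if $V(y)=0$ for every $y$ in the support of $Y$ (for $y$ outside the support the conditional distribution is irrelevant, which accounts for the ``possibly singleton'' caveat). Now $V(y)=\VAR[-\log P_{X|Y}(X|y)\mid Y=y]=0$ means the random variable $-\log P_{X|Y}(X|y)$ is $P_{X|Y}(\cdot|y)$-a.s. constant, i.e. $P_{X|Y}(x|y)$ takes a single value on the set $\clX_y=\{x:P_{X|Y}(x|y)>0\}$; that is exactly the statement that $P_{X|Y}(\cdot|y)$ is uniform on a subset of $\clX$.

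**Part $(iii)$.** By part~$(i)$, $\sigma^2(X|Y)=0$ iff $\E[V(Y)]=0$ \emph{and} $\VAR[\hat H_X(Y)]=0$. The first condition, by part~$(ii)$, forces $P_{X|Y}(\cdot|y)$ to be uniform on some set $\clX_y$ for each $y$ in the support; for such a $y$, $\hat H_X(y)=\log|\clX_y|$. The second condition then says $\log|\clX_y|$ is the same for all $y$ in the support of $Y$, i.e. there is a common integer $k$ with $|\clX_y|=k$. Conversely, if each $P_{X|Y}(\cdot|y)$ is uniform on a set of size $k$, then $V(y)=0$ and $\hat H_X(y)=\log k$ for all $y$, so both terms in the decomposition of part~$(i)$ vanish and $\sigma^2(X|Y)=0$. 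Finally $k$ ranges in $\{1,\dots,|\clX|\}$ simply because $\clX_y\subseteq\clX$ and $\clX_y\neq\emptyset$.

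**Main obstacle.** There is no serious obstacle here — once the law of total variance is invoked in part~$(i)$, parts~$(ii)$ and~$(iii)$ are bookkeeping about when a nonnegative random variable and a variance vanish. The only point requiring mild care is the handling of $y$ outside the support of $P_Y$: such $y$ impose no constraint, which is why part~$(ii)$ allows a singleton (and part~$(iii)$ the trivial $k=1$) and why, strictly speaking, all the ``for each $y\in\clY$'' statements should be read as ``for each $y$ in the support of $Y$,'' with arbitrary (e.g.\ uniform) choices off the support.
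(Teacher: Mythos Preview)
Your proposal is correct and takes essentially the same approach as the paper. For part~$(i)$ you invoke the law of total variance directly, whereas the paper derives it inline by adding and subtracting $\E[\hat H_X(Y)^2]$ in the expansion $\E[(\log P(X|Y))^2]-H(X|Y)^2$; for parts~$(ii)$ and~$(iii)$ the paper simply says ``straightforward from the definitions,'' which is precisely what you have spelled out.
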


\begin{proof}
For~$(i)$ we have,
\begin{align*}
\sigma^2(X|Y) &= \VAR[-\log P(X|Y)] \\
&= \E[(\log P(X|Y))^2] - H(X|Y)^2 \\
&= \E[(\log{P(X|Y))^2} 
	- \E[\hat{H}_X(Y)^2] 
	+ \E[\hat{H}_X(Y)^2] 
	- H(X|Y)^2 \\
&= \E\big\{
	\E[(\log P(X|Y))^2|Y] - \hat{H}_X(Y)^2 
	\big\}
 	+ \VAR[\hat{H}_X(Y)]\\
&= \E[V(Y)]
 	+ \VAR[\hat{H}_X(Y)].
\end{align*}
Parts~$(ii)$ and~$(iii)$ are
straightforward from the definitions.
\end{proof}

\subsection{Direct and converse bounds}

Before stating our main results we note that,
if $\sigma_n^2(y_1^n)$ were equal to zero 
for some side information sequence $y_1^n$, 
then each source symbol would be known 
(both to the encoder and decoder) to be 
uniformly distributed on some subset of $\clX$,
so the compression problem would be rather 
trivial. To avoid these degenerate cases,
we assume that $\sigma_n^2(y_1^n)>0$ in
Theorems~\ref{quenchedconverse}
and~\ref{quenchedupper}.

\begin{theorem}[Reference-based converse]
\label{quenchedconverse}
Suppose $(\Xp,\Yp)$ is a conditionally-i.i.d.\ source-side
information pair.
For any $0 < \epsilon < \frac{1}{2},$
the reference-based optimal compression rate satisfies,
\begin{equation}
\label{lowerboundfixed}
R^*(n,\epsilon|y_{1}^{n}) 
\geq H_n(X|y_1^n) 
+ \frac{\sigma_{n}(y_1^n)}{\sqrt{n}}Q^{-1}(\epsilon) 
-\frac{\log{n}}{2n} - \frac{1}{n}\eta(y_{1}^{n}),
\end{equation}
for all,
\begin{equation}
\label{lbthres}
n > 
\frac{(1+6m_{3}\sigma_n^{-3}(y_1^n))^{2}}
{4\bigl(Q^{-1}(\epsilon)\phi(Q^{-1}(\epsilon))\bigr)^{2}},
\end{equation} 
and any side information string $y_1^n\in\clY^n$
such that $\sigma_n^2(y_1^n)>0$,
where $\phi$ is the standard normal density,
$H_n(X|y_1^n)$ and $\sigma_n^2(y_1^n)$ are
given in~{\em (\ref{eq:Hndef})} and~{\em (\ref{eq:sndef})},
\be
m_{3} = \max_{y \in \mathcal{Y}}
\E\Big[\big|-\log{P(X|y)}-H(X|y)\big|^{3}\Big|Y=y\Big],
\label{eq:m3}
\ee
and,
\begin{equation*}
\eta(y_{1}^{n}) 
= \frac{\sigma_n^3(y_1^n) + 6m_{3}}
{\phi(Q^{-1}(\epsilon))\sigma_n^2(y_1^n)}.
\end{equation*}
\end{theorem}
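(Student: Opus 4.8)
The plan is to reduce the converse bound to a Berry--Ess\'{e}en estimate applied to the conditional information density, using the equivalence between minimal rate and excess-rate probability in~(\ref{rnepsilonstar}) together with the general converse of Theorem~\ref{generalconvSI}. First I would fix the side information string $y_1^n$ with $\sigma_n^2(y_1^n)>0$ and, conditionally on $Y_1^n=y_1^n$, note that $-\log P(X_1^n|y_1^n)=\sum_{j=1}^n\bigl(-\log P(X_j|y_j)\bigr)$ is a sum of independent (not identically distributed) random variables with mean $nH_n(X|y_1^n)$, variance $n\sigma_n^2(y_1^n)$, and uniformly bounded third absolute central moments controlled by $m_3$ in~(\ref{eq:m3}). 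The non-i.i.d.\ Berry--Ess\'{e}en theorem then gives, for all $u\in\RL$,
\begin{equation*}
\Bigl|\,\PP\Bigl[-\log P(X_1^n|y_1^n)\le nH_n(X|y_1^n)+u\sqrt{n}\,\sigma_n(y_1^n)\,\Bigm|\,Y_1^n=y_1^n\Bigr]-\Phi(u)\,\Bigr|\le \frac{6 m_3}{\sqrt{n}\,\sigma_n^3(y_1^n)},
\end{equation*}
i.e.\ the tail $\PP[-\log P(X_1^n|y_1^n)\ge t\mid Y_1^n=y_1^n]$ is within $6m_3/(\sqrt n\,\sigma_n^3)$ of $Q\bigl((t-nH_n)/(\sqrt n\,\sigma_n)\bigr)$.

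Next I would run the converse inequality. By~(\ref{rnepsilonstar}), if $R^*(n,\epsilon|y_1^n)=k/n$ then $\epsilon<\epsilon^*(n,k|y_1^n)=\PP[\ell(f^*(X_1^n|y_1^n))\ge k\mid Y_1^n=y_1^n]$, so it suffices to lower bound $k$. Apply Theorem~\ref{generalconvSI} conditionally on $Y_1^n=y_1^n$: for every $\tau>0$,
\begin{equation*}
\epsilon \ge \PP\bigl[\ell(f^*(X_1^n|y_1^n))\ge k\bigm|Y_1^n=y_1^n\bigr]\ge \PP\bigl[-\log P(X_1^n|y_1^n)\ge k+\tau\bigm|Y_1^n=y_1^n\bigr]-2^{-\tau}.
\end{equation*}
Choosing $\tau=\tfrac12\log n$ makes $2^{-\tau}=1/\sqrt n$, and feeding in the Berry--Ess\'{e}en bound gives
\begin{equation*}
\epsilon \ge Q\!\left(\frac{k+\tfrac12\log n-nH_n(X|y_1^n)}{\sqrt n\,\sigma_n(y_1^n)}\right)-\frac{6m_3}{\sqrt n\,\sigma_n^3(y_1^n)}-\frac{1}{\sqrt n}.
\end{equation*}
Now invert $Q$: since $Q$ is decreasing, this forces $k+\tfrac12\log n-nH_n(X|y_1^n)\ge \sqrt n\,\sigma_n(y_1^n)\,Q^{-1}\!\bigl(\epsilon+(6m_3/\sqrt n\,\sigma_n^3)+1/\sqrt n\bigr)$. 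The final step is a first-order Taylor expansion of $Q^{-1}$ around $\epsilon$: writing $\delta_n=(6m_3+\sigma_n^3)/(\sqrt n\,\sigma_n^3)$ for the total perturbation, one has $Q^{-1}(\epsilon+\delta_n)\ge Q^{-1}(\epsilon)-\delta_n/\phi(Q^{-1}(\epsilon))$ by convexity/monotonicity of $Q^{-1}$ on $(0,\tfrac12)$, valid as long as $\epsilon+\delta_n<\tfrac12$, which is exactly where the threshold~(\ref{lbthres}) on $n$ comes from (it guarantees $\delta_n$ is small enough). Substituting, dividing by $n$, and collecting the error terms $\tfrac{1}{n}\cdot\tfrac{\sigma_n(y_1^n)}{\phi(Q^{-1}(\epsilon))}\cdot\tfrac{6m_3+\sigma_n^3}{\sigma_n^3}\cdot\tfrac{1}{\sigma_n}$... more carefully, $\sqrt n\,\sigma_n\cdot\delta_n/\phi = (6m_3+\sigma_n^3)/(\phi\,\sigma_n^2)=\eta(y_1^n)$, which is precisely the claimed error term. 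This yields $R^*(n,\epsilon|y_1^n)=k/n\ge H_n(X|y_1^n)+\tfrac{\sigma_n(y_1^n)}{\sqrt n}Q^{-1}(\epsilon)-\tfrac{\log n}{2n}-\tfrac{1}{n}\eta(y_1^n)$.

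The main obstacle I anticipate is bookkeeping the perturbation of $Q^{-1}$ cleanly: one must verify that $\epsilon+\delta_n$ stays in the region $(0,\tfrac12)$ where $Q^{-1}$ is well-behaved and that the linearization error goes the right way (one needs a lower bound on $Q^{-1}(\epsilon+\delta_n)$, so convexity of $Q^{-1}$ on this interval must be invoked with the correct sign), and to check that the constant $6$ in the Berry--Ess\'{e}en bound and the $\tfrac12\log n$ choice of $\tau$ combine to give exactly the stated threshold~(\ref{lbthres}) and error $\eta(y_1^n)$ rather than merely an $O(1/n)$ term. The conditional-i.i.d.\ (rather than i.i.d.) hypothesis requires only that we work throughout conditionally on $Y_1^n=y_1^n$, under which $X_1^n$ genuinely has independent coordinates, so no extra difficulty arises there; the argument is entirely pathwise in $y_1^n$.
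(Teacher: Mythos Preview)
Your approach is essentially the paper's: Berry--Ess\'{e}en applied to the conditionally independent sum $\sum_j(-\log P(X_j|y_j))$, combined with the general converse (applied conditionally on $y_1^n$, which amounts to invoking \cite[Theorem~4]{kontoyiannis-verdu:14} for the source $P(\cdot|y_1^n)$) with $\tau=\tfrac12\log n$. The paper runs the argument ``forward'' --- it fixes $K_n=nH_n+\sqrt n\,\sigma_n Q^{-1}(\epsilon)-\eta$ and shows directly that $\PP[\ell(f_n^*)\geq K_n-\tfrac12\log n\mid y_1^n]\geq\epsilon$ --- whereas you run it ``backward'' by inverting $Q$. Two points need fixing.

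First, there is a logical slip: you correctly note $\epsilon<\epsilon^*(n,k|y_1^n)$ from~(\ref{rnepsilonstar}) and then immediately write $\epsilon\geq\PP[\ell(f^*)\geq k\mid y_1^n]$, which contradicts it. The inequality you actually need from~(\ref{rnepsilonstar}) is the other half, $\epsilon\geq\epsilon^*(n,k+1|y_1^n)$, so the converse must be applied at $k+1$; the stray $+1$ is absorbed in the $O(1/n)$ term but does not vanish, so your route literally yields $\eta+1$ rather than $\eta$. The paper's forward formulation sidesteps this.

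Second, your convexity-of-$Q^{-1}$ step requires $\epsilon+\delta_n<\tfrac12$, and that is \emph{not} what~(\ref{lbthres}) encodes. The paper instead uses the direct inequality $Q(\alpha-\Delta)-Q(\alpha)\geq\Delta\phi(\alpha)$, valid whenever $\Delta\leq 2\alpha$ (since then $|t|\leq\alpha$ throughout $[\alpha-\Delta,\alpha]$, hence $\phi(t)\geq\phi(\alpha)$), and~(\ref{lbthres}) is precisely this condition with $\alpha=Q^{-1}(\epsilon)$ and $\Delta=\eta(y_1^n)/(\sigma_n\sqrt n)$. Your inversion gives a valid threshold, just not the stated one; to recover~(\ref{lbthres}) exactly you should linearise $Q$ rather than $Q^{-1}$.
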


Note that, by the definitions in Section~\ref{descriptionofanoptimal},
Theorem~\ref{quenchedconverse}
obviously also holds for prefix-free codes,
with $R_p^*(n,\epsilon|y_1^n)$ in place of $R^*(n,\epsilon|y_1^n)$.

\begin{proof}
Since, conditional on $y_{1}^{n}$, 
the random variables $X_1^n$ are independent,
we have,
\begin{align}
& \PP\left.\left[ -\log{P(X_{1}^{n}|y_{1}^{n})} 
\geq 
\sum_{i=1}^{n}{H(X|y_{i})} + \sqrt{n}\sigma_n(y_1^n)Q^{-1}(\epsilon) 
- \eta(y_{1}^{n})\right|Y_1^n = y_1^n\right] \nonumber\\
& = \PP\biggl[ 
\left.\frac{\sum_{i=1}^{n}{(-\log{P(X_{i}|y_{i})} - H(X|y_{i}))}}
{\sigma_n(y_1^n)\sqrt{n}} \geq Q^{-1}(\epsilon) 
- \frac{\eta(y_{1}^{n})}{\sigma_n(y_1^n)\sqrt{n}}
\right|Y_1^n = y_1^n\biggr] \nonumber\\ 
&\geq \label{BEjustify} 
Q\left(Q^{-1}(\epsilon)-\frac{\eta(y_{1}^{n})}{\sigma_n(y_1^n)\sqrt{n}}\right) 
- 6\frac{m_{3}}{\sigma_n^3(y_1^n)\sqrt{n}} \\ 
&\geq \label{MVTjustify} \epsilon + \frac{\eta(y_{1}^{n})}
{\sigma_n(y_1^n)\sqrt{n}}\phi(Q^{-1}(\epsilon)) 
- 6\frac{m_{3}}{\sigma_n^3(y_1^n)\sqrt{n}} \\ 
&= \epsilon + \frac{1}{\sqrt{n}},
\label{eq:eta}
\end{align}
where~(\ref{BEjustify}) follows from 
the Berry-Ess\'{e}en bound~\cite{fellerII:book},
(\ref{MVTjustify}) follows from the fact that
a simple first-order Taylor expansion gives,
\begin{equation} \label{QineqConv}
Q(\alpha-\Delta) - Q(\alpha ) \geq \Delta\phi(\alpha), 
\qquad\mbox{for}\;
0< \alpha < \frac{\Delta}{2},
\end{equation}
and~(\ref{eq:eta}) follows from the definition
of $\eta(y_1^n)$.
Putting $\alpha = Q^{-1}(\epsilon)$ and 
$\Delta = \Delta(y_{1}^{n}) = \frac{\eta(y_{1}^{n})}{\sqrt{s_{n}}}$, 
(\ref{lbthres})~is sufficient 
for (\ref{QineqConv}) to hold.

Since we condition on the fixed side information sequence $y_{1}^{n}$, 
\cite[Theorem~4]{kontoyiannis-verdu:14} applies,
with $\tau = \frac{1}{2}\log{n}$, where we replace $X$ 
by $X_{1}^{n}$ with PMF $P_{X_1^n|Y_1^n}(\cdot|y_{1}^{n})$. 
Thus, putting,
$K_{n}=\sum_{i=1}^{n}{H(X|y_{i})} + 
\sigma_n(y_1^n)\sqrt{n}Q^{-1}(\epsilon) - \eta(y_{1}^{n}),$
yields,
$$
\PP\left.\left[ \ell(f_n^{*}(X_{1}^{n}|y_{1}^{n})) 
\geq  K_{n} -\frac{\log{n}}{2}\right|Y_1^n = y_1^n  \right] 
\geq \PP\big[ -\log{P(X_{1}^{n}|y_{1}^{n})} \geq K_{n}
\big|Y_1^n = y_1^n\big] - \frac{1}{\sqrt{n}}
\geq \epsilon,
$$
and the claimed bound follows.
\end{proof}

\medskip

Although the expressions~(\ref{lowerboundfixed})
and~(\ref{lbthres}) in
Theorem~\ref{quenchedconverse}
are quite involved, we note that their purpose
is not to be taken as exact values used in practice.
Instead, their utility is to first demonstrate that
finite-$n$ performance guarantees that
are accurate up to $O(1/n)$ terms in the rate
are indeed possible to provide,
and to illustrate the nature of the
dependence of the rate and the minimal blocklength
on the problem parameters. Similar comments
apply to the constants in Theorems~\ref{quenchedupper},
\ref{IIDAnnealedAchievability},
and~\ref{IIDAnnealedConverse} below. Nevertheless,
the actual numerical values obtained are in many
cases quite realistic, as illustrated by the example
in the remark following 
Theorem~\ref{quenchedupper}.

Next we derive an upper bound to $R^*(n,\epsilon|y_1^n)$
that matches the lower bound
in Theorem~\ref{quenchedconverse} up to and including
the third-order term. Note that, in view 
of Theorem~\ref{prefixpenaltytheorem},
the result of Theorem~\ref{quenchedupper} also holds
for prefix-free codes, with $R_p^*(n,\epsilon|y_1^n)$ and
$\zeta_n(y_1^n)+1$ in place of 
$R^*(n,\epsilon|y_1^n)$ and
$\zeta_n(y_1^n)$, respectively.

\begin{theorem}[Reference-based achievability]
\label{quenchedupper}
Let $(\Xp,\Yp)$ be a conditionally-i.i.d.\ source-side
information pair.
For any $0 < \epsilon \leq \frac{1}{2},$
the reference-based optimal compression rate satisfies,
$$
R^*(n,\epsilon|y_1^n) \leq H_n(X|y_1^n) + \frac{\sigma_n(y_1^n)}{\sqrt{n}}
Q^{-1}(\epsilon) - \frac{\log{n}}{2n} +\frac{1}{n}\zeta_n(y_1^n),
$$
for all,
\begin{equation} 
\label{nthresupperquenched}
n > \frac{36m_3^2}{\epsilon^2\sigma_n^6(y_1^n)},
\end{equation}
and any side information string 
$y_1^{n} \in \mathcal{Y}^{n}$ 
such that $\sigma_n^2(y_1^n)>0$,
where 
$H_n(X|y_1^n)$ and $\sigma_n^2(y_1^n)$ are
given in~{\em (\ref{eq:Hndef})} and~{\em (\ref{eq:sndef})},
$m_3$ is given in~{\em (\ref{eq:m3})}, and,
\be
\zeta_n(y_1^n)
=
\frac{6m_3}
{\sigma_n^3(y_1^n)
\phi\Big(\Phi^{-1}\Big(\Phi(Q^{-1}(\epsilon)) + \frac{6m_3}
{\sqrt{n}\sigma_n^3(y_1^n)}\Big)\Big)} 
+ 
\log{\Bigl( \frac{\log{e}}{\sqrt{2\pi\sigma_n^2(y_1^n) }} 
+ \frac{12m_3}{\sigma_n^3(y_1^n)}\Bigr)}.
\label{eq:zeta}
\ee
\end{theorem}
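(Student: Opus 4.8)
The plan is to mirror the structure of the proof of Theorem~\ref{quenchedconverse}, but now producing an \emph{upper} bound on the optimal description lengths and hence on $R^*(n,\epsilon|y_1^n)$. The key tool is the tight upper bound \eqref{eq:exact2} from Theorem~\ref{lemmalength}, which controls $\ell(f^*(x_1^n|y_1^n))$ in terms of the conditional probability mass above the level $P(x_1^n|y_1^n)$. First I would fix $y_1^n$ with $\sigma_n^2(y_1^n)>0$ and set $k_n=\lceil nR_n\rceil$ for the target rate $R_n$ appearing in the statement, i.e.\ $nR_n = \sum_i H(X|y_i) + \sqrt{n}\,\sigma_n(y_1^n)Q^{-1}(\epsilon) - \tfrac{1}{2}\log n + \zeta_n(y_1^n)$. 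Using the equivalence \eqref{rnepsilonstar} between rate and excess-rate probability, it suffices to show $\epsilon^*(n,k_n+1|y_1^n)\le\epsilon$, i.e.\ that $\PP[\ell(f^*(X_1^n|y_1^n))\ge k_n+1\,|\,Y_1^n=y_1^n]\le\epsilon$.

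\medskip

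\noindent
Second, I would apply \eqref{eq:exact2} pointwise: on the event $\{\ell(f^*(X_1^n|y_1^n))\ge k_n+1\}$, we must have
\[
\log\Bigl(\E\bigl[\tfrac{1}{P(X_1^n|y_1^n)}\mathbb{I}_{\{P(X_1^n|y_1^n)\ge P(x_1^n|y_1^n)\}}\,\big|\,Y_1^n=y_1^n\bigr]\Bigr)\ge k_n+1 .
\]
The inner expectation is at most $2^{i(x_1^n)}$ times the conditional probability of the set $\{x':i(x')\le i(x_1^n)\}$, where $i(x_1^n)=-\log P(x_1^n|y_1^n)$ is the conditional information density; in fact a cleaner route is to bound the expectation above by $2^{i(x_1^n)}\cdot\PP[i(X_1^n|y_1^n)\le i(x_1^n)\,|\,y_1^n]\le 2^{i(x_1^n)}$, so that $\ell(f^*)\ge k_n+1$ forces $i(x_1^n)\ge k_n+1 - (\text{a correction})$. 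More precisely I would split according to whether $i(x_1^n)$ exceeds a threshold $\gamma_n := \sum_i H(X|y_i)+\sqrt{n}\,\sigma_n(y_1^n)\Phi^{-1}\bigl(\Phi(Q^{-1}(\epsilon))+\tfrac{6m_3}{\sqrt n\,\sigma_n^3(y_1^n)}\bigr)$: on $\{i(X_1^n)\le\gamma_n\}$ the expectation in \eqref{eq:exact2} is bounded by $2^{\gamma_n}$ times a probability, forcing the length to be strictly below $k_n+1$ once the residual constant in $\zeta_n$ — namely the $\log\bigl(\tfrac{\log e}{\sqrt{2\pi\sigma_n^2}}+\tfrac{12m_3}{\sigma_n^3}\bigr)$ term — is chosen large enough to absorb $\log\PP[i(X_1^n)\le\gamma_n\,|\,y_1^n]$ via a local-limit-type estimate on the density of the normalised sum. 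Hence $\PP[\ell(f^*)\ge k_n+1\,|\,y_1^n]\le\PP[i(X_1^n|y_1^n)>\gamma_n\,|\,y_1^n]$.

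\medskip

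\noindent
Third, I would estimate the remaining tail $\PP[i(X_1^n|y_1^n)>\gamma_n\,|\,y_1^n]$ by the Berry--Ess\'een theorem for the independent (but not identically distributed) summands $-\log P(X_i|y_i)-H(X|y_i)$: normalising by $\sigma_n(y_1^n)\sqrt n$, this tail is at most
\[
Q\Bigl(\Phi^{-1}\bigl(\Phi(Q^{-1}(\epsilon))+\tfrac{6m_3}{\sqrt n\,\sigma_n^3(y_1^n)}\bigr)\Bigr)+\frac{6m_3}{\sigma_n^3(y_1^n)\sqrt n}=\epsilon,
\]
where the Berry--Ess\'een constant is taken as $6$ (as in \cite{fellerII:book}) and the argument of $Q$ is chosen precisely so that the two terms telescope to $\epsilon$; this is where the first summand of $\zeta_n$ in \eqref{eq:zeta} enters, since converting the shift inside $\Phi^{-1}$ back into an additive shift of the rate costs $\tfrac{6m_3}{\sigma_n^3\sqrt n}$ divided by $\phi$ evaluated at that shifted point, by the mean value theorem. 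The condition \eqref{nthresupperquenched}, $n>36m_3^2/(\epsilon^2\sigma_n^6)$, is exactly what guarantees $\tfrac{6m_3}{\sqrt n\,\sigma_n^3}<\epsilon$, so that $\Phi(Q^{-1}(\epsilon))+\tfrac{6m_3}{\sqrt n\,\sigma_n^3}<1$ and $\Phi^{-1}$ of it is finite, and also keeps $\phi$ at that point bounded below.

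\medskip

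\noindent
\textbf{Main obstacle.} The delicate point is the second step: controlling the expectation in \eqref{eq:exact2} by $2^{i(x_1^n)}$ times a \emph{probability} and then showing that probability contributes only an additive $O(1)$ to the length, uniformly over the relevant range of $x_1^n$. One must argue that $\E[\tfrac{1}{P(X_1^n|y_1^n)}\mathbb{I}_{\{P(X_1^n|y_1^n)\ge p\}}\,|\,y_1^n]$ is at most $2^{i(x_1^n)}\cdot c_n$ where $\log c_n$ is absorbed by the last term of $\zeta_n$ — this requires a quantitative estimate (essentially a local central limit or Berry--Ess\'een-derived density bound) showing the distribution of $i(X_1^n|y_1^n)$ does not concentrate too much mass just below any level, and that is precisely why the constant $\tfrac{\log e}{\sqrt{2\pi\sigma_n^2}}+\tfrac{12m_3}{\sigma_n^3}$ (a Gaussian density maximum plus a Berry--Ess\'een slack) appears inside the logarithm in \eqref{eq:zeta}. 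Getting this bound with explicit, clean constants — rather than an unspecified $O(1)$ — is the part that demands the most care, and it is the step where the improvement of Theorem~\ref{lemmalength} over Theorem~\ref{t:achieve} is essential, since the cruder bound $\ell(f^*)\le i(x_1^n)$ would lose the sharp $-\tfrac{\log n}{2n}$ third-order term.
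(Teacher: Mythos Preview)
Your overall plan coincides with the paper's: use the counting bound~\eqref{eq:exact2} from Theorem~\ref{lemmalength}, fix a threshold for the conditional information density so that the excess-rate probability is at most $\epsilon$ via Berry--Ess\'een, and then bound the number of strings below that threshold. The paper does exactly this, defining the threshold $\log\beta_n$ as the exact $(1-\epsilon)$-quantile of $-\log P(X_1^n|y_1^n)$ rather than your explicit $\gamma_n$, but the two are equivalent up to the Berry--Ess\'een correction, and your handling of the tail probability in step~3 is correct.

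There is, however, a genuine gap in your step~2. You write that on $\{i(X_1^n)\le\gamma_n\}$ the expectation in~\eqref{eq:exact2} is ``bounded by $2^{\gamma_n}$ times a probability,'' and that the $\log(\cdot)$ term in $\zeta_n$ is chosen to absorb $\log\PP[i(X_1^n)\le\gamma_n\mid y_1^n]$. But that probability is $\approx 1-\epsilon$, so its logarithm is $O(1)$, not $-\tfrac{1}{2}\log n$; the crude bound $\E[2^{i(X')}\mathbb{I}_{\{i(X')\le\gamma_n\}}]\le 2^{\gamma_n}\PP[i(X')\le\gamma_n]$ would recover only Theorem~\ref{t:achieve} and lose the third-order term entirely. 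The $-\tfrac{\log n}{2n}$ comes instead from the \emph{exponential weighting}: writing the count as $2^{\gamma_n}\alpha_n$ with
\[
\alpha_n=\E\Bigl[2^{-\sigma_n(y_1^n)\sqrt{n}\,(\lambda_n-Z_n)}\,\mathbb{I}_{\{Z_n\le\lambda_n\}}\,\Big|\,Y_1^n=y_1^n\Bigr],
\]
where $Z_n$ is the standardised conditional information density and $\lambda_n$ the standardised threshold, the paper shows $\alpha_n=O(n^{-1/2})$ by integrating by parts against the distribution function $F_n$ of $Z_n$, comparing to the Gaussian analogue $\bar\alpha_n\le\tfrac{\log e}{\sqrt{2\pi n\,\sigma_n^2}}$, and controlling $F_n-\Phi$ by two further applications of Berry--Ess\'een (this is exactly what produces the $12m_3/\sigma_n^3$ inside the logarithm). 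You correctly gesture toward this in your ``Main obstacle'' paragraph, but it is not what you wrote in step~2: the point is that the factor $2^{-\sigma_n\sqrt{n}(\lambda_n-Z_n)}$ localises the integral to an interval of width $O((\sigma_n\sqrt{n})^{-1})$ near $\lambda_n$, and \emph{that} localisation --- not any cumulative probability --- delivers the $n^{-1/2}$.
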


\medskip

\noindent
{\bf Remark. } 
Before giving the proof we note that, although as mentioned
earlier the values of the constants appearing in the theorem
are more indicative of general trends than actual realistic
values, in most cases they do give reasonable estimates.
For example, consider the lower bound in~(\ref{nthresupperquenched})
on the blocklength required for Theorem~\ref{quenchedupper} to hold.

In order to provide a fair illustration, as an example we 
took a {\em random} joint distribution 
for $(X,Y)$, assuming they both take values in a four-letter alphabet,
and obtained:
$$
P_{XY}=
\left( \begin{array}{cccc}
    0.0195 &  0.0641 &  0.0078 &  0.0228\\
    0.0624 &  0.3250 &  0.0080 &  0.1196\\
    0.0481 &  0.0836 &  0.0043 &  0.0110\\
    0.0543 &  0.0442 &  0.0099 &  0.1154
\end{array} \right).
$$
Suppose that the side information 
string $y_1^n$ is not ``too atypical,''
in that its empirical frequencies
$(0.2,0.55,0.05,0.2)$ are not too far
from 
$P_Y=(0.1843,  0.5169, 0.0301, 0.2687)$,
the true marginal of $Y$.
Then 
the lower bound 
in~(\ref{nthresupperquenched})
is in fact a little better than,
$$n\geq \frac{1950}{\epsilon^2}.$$
This is 
intuitively satisfying as it shows that (1.)
the smaller the error probability $\epsilon$
is required to be, the larger the necessary blocklength
for rate guarantees; and (2.) the dependence
of the blocklength on $\epsilon$ is of $O(1/\epsilon^2)$.

Although the above example was for a randomly selected joint
distribution, there are many cases where the bound
in~(\ref{nthresupperquenched}) is in fact much more 
practical.
For example, taking $X,Y$
with values in $\{1,2,\ldots,100\}$, choosing
for the marginal of $Y$, $P_Y(k)\propto 2^{-k}$,
and for the conditional $P_{X|Y}(n|k)\propto n^{-k}$,
then for any typical side information string $y_1^n$,
we obtain:
$$n\geq \frac{6}{\epsilon^2}.$$

\begin{proof}
Given $\epsilon$ and $y_1^n$, let $\beta_n = \beta_n(y_1^n)$ 
be the unique constant such that,
\begin{align} 
\label{betandef1}
\Pbig{-\log{P(X_1^n|y_1^n)} \leq \log{\beta_n}|Y_1^n = y_1^n} &\geq 1-\epsilon, \\
\Pbig{-\log{P(X_1^n|y_1^n)} < \log{\beta_n}|Y_1^n = y_1^n} &< 1-\epsilon,
\nonumber
\end{align}
and write $\lambda_n$ for its normalised
version,
\begin{equation*}
\lambda_n=\frac{\log{\beta_n} - \sum_{i=1}^n{H(X|y_i)}}
{\sqrt{n}\sigma_n(y_1^n)}.
\end{equation*}
Using 
the Berry-Ess\'{e}en bound 
\cite{fellerII:book}
yields,
\begin{equation*}
1-\epsilon 
\leq 
\BBP\left.\left[\frac{-\log{P(X_1^n|y_1^n)}-\sum_{i=1}^n{H(X|y_i)}}
{\sigma_n(y_1^n)\sqrt{n}} \leq \lambda_n\right|Y_1^n = y_1^n\right]
\leq \Phi(\lambda_n) + \frac{6m_3}{\sigma_n^3(y_1^n)\sqrt{n}},
\end{equation*}
and,
\begin{equation}\label{sandwichlambdan2}
1-\epsilon > 
\BBP\left.\left[\frac{-\log{P(X_1^n|y_1^n)} - \sum_{i=1}^n{H(X|y_i)}}
{\sigma_n(y_1^n)\sqrt{n}} < \lambda_n\right|Y_1^n = y_1^n\right]
\geq \Phi(\lambda_n) - \frac{6m_3}{\sigma_n^3(y_1^n)\sqrt{n}}.
\end{equation}
Define,
\begin{equation*}
\lambda = \Phi^{-1}(1-\epsilon) = Q^{-1}(\epsilon).
\end{equation*}
For $n$ satisfying~(\ref{nthresupperquenched}), we have,
$$\Phi(\lambda) + \frac{6m_3}{\sigma_n^3(y_1^n)\sqrt{n}} < 1,$$
so, using~(\ref{sandwichlambdan2}) and a 
first-order Taylor expansion,
we obtain,
\begin{align}
\lambda_n 
&\leq \Phi^{-1}\Big(\Phi(\lambda) + 
\frac{6m_3}{\sigma_n^3(y_1^n)\sqrt{n}} \Big) 
\nonumber\\
&=\lambda 
+ \frac{6m_3}{\sigma_n^3(y_1^n)\sqrt{n}}(\Phi^{-1})'(\xi_n)
\nonumber\\
&=\label{lambdanbound} \lambda 
+ \frac{6m_3}{\sigma_n^3(y_1^n)\sqrt{n}\phi(\Phi^{-1}(\xi_n))},
\end{align}
for some $\xi_n = \xi_n(y_1^n)$ between
$\Phi(\lambda)$ and 
$\Phi(\lambda) + 6m_3/\sigma_n^3(y_1^n)\sqrt{n}$. 

\newpage

Since $\epsilon \leq \frac{1}{2},$ we have 
$\lambda \geq 0$ and $\Phi(\lambda) \geq \frac{1}{2},$ 
so that $\xi_n \geq \frac{1}{2}.$ Also, since $\Phi^{-1}(t)$ is strictly 
increasing for all $t$ and $\phi$ is strictly decreasing for $t \geq 0$, 
from~(\ref{lambdanbound}) we get, 
\begin{equation}
\label{lambdanfinalbound}
\lambda_n \leq \lambda + \frac{6m_3}
{\sigma_n^3(y_1^n)\sqrt{n}}\times
\frac{1}{\phi\Big(\Phi^{-1}\Big(\Phi(\lambda) 
+ \frac{6m_3}{\sigma_n^3(y_1^n)\sqrt{n}}\Big)\Big)}.
\end{equation} 

On the other hand, from the discussion 
in the proof of Theorem~\ref{lemmalength}, 
together with (\ref{betandef1}), we conclude that,
\begin{equation*}
\BBP
\left.\left[
\ell(f^*_n(X_1^n|y_1^n))>\log\left(\sum_{x_1^n \in \mathcal{X}^n}
{\mathbb{I}_{\{P(x_1^n|y_1^n) \geq \frac{1}{\beta_n}\}}}\right)
\right|Y_1^n=y_1^n\right] \leq \epsilon,
\end{equation*}
hence,
\begin{align}
R^*(n,\epsilon|y_1^n) 
&\leq 
\frac{1}{n}\log\left(\sum_{x_1^n \in \mathcal{X}^n}
{\mathbb{I}_{\{P(x_1^n|y_1^n) \geq \frac{1}{\beta_n}\}}}\right)
\nonumber\\
&=\frac{1}{n}
\log \left( \BBE\left[\left.
2^{-\log{P(X_1^n|y_1^n)}}\mathbb{I}_{\{-\log{P(X_1^n|y_1^n)}
\leq \log{\beta_n}\}}\right|Y_1^n = y_1^n\right] \right) 
\nonumber\\ 
\label{Rnyfinal}
&= \frac{1}{n}\sum_{i=1}^n{H(X|y_i)} 
+ \lambda_n\frac{\sigma_n(y_1^n)}{\sqrt{n}} + \frac{1}{n}\log{\alpha_n},
\end{align}
where,
\begin{align*}
\alpha_n 
&= \BBE
\left.\left[
2^{-\log{\beta_n}-\log{P(X_1^n|y_1^n)}}
\mathbb{I}_{\{\log{\beta_n}+\log{P(X_1^n|y_1^n)} \geq 0\}}
\right|Y_1^n = y_1^n\right]
\\ 
&= \BBE\left[\left.
2^{-\sigma_n(y_1^n)\sqrt{n}(\lambda_n - Z_n)}
\mathbb{I}_{\{\sigma_n(y_1^n)\sqrt{n}(\lambda_n - Z_n) \geq 0\}}
\right|Y_1^n = y_1^n\right],
\end{align*}
and,
\ben
Z_n &=&
	\frac{1}{\sigma_n(y_1^n)\sqrt{n}}
	\left[-\log{P(X_1^n|y_1^n)} - \sum_{i=1}^n{H(X|y_i)}\right]\\
&=&
	\frac{1}{\sigma_n(y_1^n)\sqrt{n}}
	\sum_{i=1}^n{\big[-\log{P(X_i|y_i) - H(X|y_i)}\big]}. 
\een
Note that $Z_n$ has zero mean and unit variance. 
Let,
\begin{equation*}
\bar{\alpha}_n = 
\BBE\left(
2^{-\sigma_n(y_1^n)\sqrt{n}(\lambda_n - Z)}
\mathbb{I}_{\{\sigma_n(y_1^n)\sqrt{n}(\lambda_n - Z) \geq 0\}}
\right),
\end{equation*} 
where $Z$ is a standard normal random variable.
Then,
$$\bar{\alpha}_n = \int_{0}^{\infty}
\frac{1}{\sqrt{2\pi n\sigma^2_n(y_1^n)}}\,
2^{-x}\,
\exp\Big\{-\frac{(x-\lambda_n\sigma_n(y_1^n)\sqrt{n})^2}
{2n\sigma_n^2(y_1^n)}\Big\}\,dx 
\leq \frac{\log{e}}{\sqrt{2\pi n\sigma_n^2(y_1^n)}}.
$$
Denoting by $F_n(t)$ the distribution function
of $Z_n$, and integrating by parts,
\begin{align}
\alpha_n 
&= 
	\int_{-\infty}^{\lambda_n}
	2^{-\sigma_n(y_1^n)\sqrt{n}(\lambda_n-t)}dF_n(t)
	\nonumber\\
&= 
	F_n(\lambda_n) 
	- (\log_e{2})\int_{-\infty}^{\lambda_n}{F_n(t)\sigma_n(y_1^n)\sqrt{n}
	2^{-\sigma_n(y_1^n)\sqrt{n}(\lambda_n-t)}dt} 
	\nonumber\\
&= 
	\bar{\alpha}_n + F_n(\lambda_n) - \Phi(\lambda_n) 
	-(\log_e{2})\sigma_n(y_1^n)\sqrt{n}
	\int_{-\infty}^{\lambda_n}{(F_n(t)-\Phi(t))
	2^{-\sigma_n(y_1^n)\sqrt{n}(\lambda_n-t)}dt}  
	\nonumber\\
&\leq 
	\bar{\alpha}_n + \frac{6m_3}{\sqrt{n}\sigma_n^3(y_1^n)} 
	+ \frac{6m_3}{\sqrt{n}\sigma_n^2(y_1^n)}
	(\log_e{2})
	\int_{-\infty}^{\lambda_n}
	{2^{-\sigma_n(y_1^n)\sqrt{n}(\lambda_n-t)}dt} 
	\nonumber\\
&\leq 
	\bar{\alpha}_n + \frac{12m_3}{\sqrt{n}\sigma_n^3(y_1^n)} 
	\nonumber
	\\ 
\label{anfinalbound}
&\leq 
	\frac{1}{\sqrt{n}}
	\Bigl( \frac{\log{e}}{\sqrt{2\pi \sigma_n^2(y_1^n)}} 
	+ \frac{12m_3}{\sigma_n^3(y_1^n)}\Bigr),
\end{align}
where we used 
the Berry-Ess\'{e}en bound~\cite{fellerII:book}
twice.

The claimed bound follows from~(\ref{lambdanfinalbound}),
(\ref{Rnyfinal}), and~(\ref{anfinalbound}).
\end{proof}

\section{Normal Approximation for Pair-Based Compression}
\label{annealedsectionmemoryless}

Here we give upper and lower bounds to the pair-based optimal
compression rate $R^*(n,\epsilon)$, analogous to those
presented in Theorems~\ref{quenchedconverse}
and~\ref{quenchedupper} for the 
reference-based optimal rate: For an i.i.d.\ source-side
information pair
$(\Xp,\Yp)$, the result in Theorems~\ref{IIDAnnealedAchievability}
and~\ref{IIDAnnealedConverse} state that,
\be
R^*(n,\epsilon)=
H(X|Y) + \frac{\sigma(X|Y)}{\sqrt{n}}Q^{-1}(\epsilon) 
-\frac{\log{n}}{2n} + O\Big(\frac{1}{n}\Big),
\label{eq:pbrough}
\ee
with explicit upper and lower bounds for the
$O(1/n)$ term.

Recall the discussion in the Introduction
comparing~(\ref{eq:pbrough}) with the corresponding
expansion~(\ref{eq:rateapprox}) in the reference-based case.
In particular, we note that, for large $n$, we typically
have 
$H_n(X|y_1^n)\approx H(X|Y)$, but
$\sigma_n^2(y_1^n)<\sigma^2(X|Y)$.

Unlike our achievability result for the reference-based rate,
the corresponding expansion for the pair-based rate 
requires a very different approach from that in the 
case without side information. The main step in the proof
of the corresponding result, given 
in~\cite[Eq.~(167)]{kontoyiannis-verdu:14}) 
does not generalise to the side-information setting.
Instead, we use our new Theorem~\ref{lemmalength} as 
the main approximation tool. 

\begin{theorem}[Pair-based achievability]
\label{IIDAnnealedAchievability}
Let $(\Xp,\Yp)$ be an i.i.d.\ source-side
information pair, with conditional varentropy rate
$\sigma^2=\sigma^2(X|Y)>0$.
For any $0 < \epsilon \leq \frac{1}{2},$
the pair-based optimal compression rate satisfies,
\begin{equation}
R^*(n,\epsilon) \leq H(X|Y) + \frac{\sigma(X|Y)}{\sqrt{n}}Q^{-1}(\epsilon) 
-\frac{\log n}{2n} + \frac{C}{n},
\label{eq:iidachieve}
\end{equation}
for all,
\begin{equation} 
\label{nthrescondrateub}
n > 
\frac{4\sigma^2}{B^2\phi(Q^{-1}(\epsilon))^2}
\times\left[
\frac{B^2}
{2\sqrt{2\pi e}\sigma^2} 
+  \frac{\psi^2}{(1-\frac{1}{2\pi})^2\bar{v}^2}\right]^2,
 \end{equation}
where $\bar{v}=\BBE[V(Y)]$ and $\psi^2=\VAR(V(Y))$,
with $V$ defined in~{\em (\ref{eq:Vdef})},
\begin{equation*} 
C = \log
{\Bigl(
\frac{2}{\bar{v}^{1/2}} 
+ \frac{24m_3(2\pi)^{3/2}}{\bar{v}^{3/2}}  
\Bigr)} 
+ B,
\end{equation*}
$m_3$ is given in~{\em (\ref{eq:m3})}, 
and,
$$B=
\frac{\E\big[|-\log{P(X|Y)}-H(X|Y)\big|^3]}{\sigma^2\phi(Q^{-1}(\epsilon))}.$$
\end{theorem}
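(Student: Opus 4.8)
The plan is to follow the architecture of the proof of Theorem~\ref{quenchedupper}, but to fix the coding threshold using the \emph{joint} statistics of $(\Xp,\Yp)$ while controlling the optimal description lengths through the \emph{conditional} quantities $\sigma_n^2(y_1^n)$ by means of Theorem~\ref{lemmalength}. As noted in the text, the approximation step of \cite[Eq.~(167)]{kontoyiannis-verdu:14} has no analogue in the present setting, so Theorem~\ref{lemmalength} must do all of the approximation work. Throughout I would take $\bar v=\E[V(Y)]>0$, which is what the constants in the statement require; the complementary degenerate case $\bar v=0$, in which every conditional law $P_{X|Y}(\cdot\,|y)$ is uniform on a subset of $\mathcal{X}$ by Proposition~\ref{varentropyprop}$(ii)$, would be handled by a separate, direct argument.

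First I would reserve a small error budget $\delta_n$ and let $\log\beta_n$ be the smallest real number with $\PP[-\log P(X_1^n|Y_1^n)\le\log\beta_n]\ge 1-\epsilon+\delta_n$. Under the joint law, $-\log P(X_1^n|Y_1^n)=\sum_{i=1}^n(-\log P_{X|Y}(X_i|Y_i))$ is a sum of i.i.d.\ terms with mean $H(X|Y)$, variance $\sigma^2=\sigma^2(X|Y)$, and third absolute central moment $\rho$ (of $-\log P(X|Y)$ about $H(X|Y)$), so the Berry-Ess\'{e}en bound \cite{fellerII:book} followed by a first-order Taylor expansion of $\Phi^{-1}$ — exactly as in~(\ref{lambdanbound})--(\ref{lambdanfinalbound}) but with $\sigma$ in place of $\sigma_n(y_1^n)$ — gives, once $n$ is large enough,
\[
\log\beta_n\;\le\;nH(X|Y)+\sqrt n\,\sigma\,Q^{-1}(\epsilon)+B ,
\]
where $B=\rho/(\sigma^2\phi(Q^{-1}(\epsilon)))$ absorbs the Berry-Ess\'{e}en correction and the shift caused by replacing $\epsilon$ with $\epsilon-\delta_n$ is of lower order.

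Next, on the event $A=\{-\log P(X_1^n|Y_1^n)\le\log\beta_n\}$, which has probability at least $1-\epsilon+\delta_n$, I would apply the upper bound~(\ref{eq:exact2}) of Theorem~\ref{lemmalength}, for each $y_1^n$, to the conditional law $P(X_1^n=\cdot\,|Y_1^n=y_1^n)=\prod_i P_{X|Y}(x_i|y_i)$. On $A$ one has $P(x_1^n|y_1^n)\ge 1/\beta_n$, so the indicator in~(\ref{eq:exact2}) is dominated by $\mathbb{I}_{\{P(X_1^n|y_1^n)\ge 1/\beta_n\}}$ and hence $\ell(f_n^*(X_1^n|Y_1^n))\le\log\gamma_n(Y_1^n)$ on $A$, where $\gamma_n(y_1^n)=\beta_n\,\alpha_n(y_1^n)$ and $\alpha_n(y_1^n)=\E\big[2^{S_n-\log\beta_n}\mathbb{I}_{\{S_n\le\log\beta_n\}}\,\big|\,Y_1^n=y_1^n\big]$ with $S_n\eqdef-\log P(X_1^n|y_1^n)$ is the exact pair-based analogue of the quantity $\alpha_n$ in the proof of Theorem~\ref{quenchedupper}. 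The point is that, conditionally on $Y_1^n=y_1^n$, $S_n$ has mean $\sum_i H(X|y_i)$ and variance $\sum_i V(y_i)=n\sigma_n^2(y_1^n)$; separating off its Gaussian part and bounding the remainder by Berry-Ess\'{e}en and integration by parts, exactly as in the derivation of~(\ref{anfinalbound}), gives the $y_1^n$-dependent estimate
\[
\alpha_n(y_1^n)\;\le\;\frac{1}{\sqrt n}\Big(\frac{\log e}{\sqrt{2\pi\,\sigma_n^2(y_1^n)}}+\frac{12m_3}{\sigma_n^3(y_1^n)}\Big).
\]

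The main obstacle, and the genuinely new feature relative to the no-side-information and reference-based settings, is that this estimate is useful only where $\sigma_n^2(y_1^n)$ is bounded away from $0$, whereas $\sigma_n^2(Y_1^n)=\frac1n\sum_i V(Y_i)$ concentrates not around $\sigma^2$ but around $\bar v=\E[V(Y)]$, which by Proposition~\ref{varentropyprop}$(i)$ can be strictly smaller. I would therefore split according to $\mathcal{B}_n=\{y_1^n:\sigma_n^2(y_1^n)<\bar v/(2\pi)\}$. Chebyshev's inequality applied to the i.i.d.\ sum $\sum_i V(Y_i)$, of variance $n\psi^2$ with $\psi^2=\VAR(V(Y))$, gives $\PP[Y_1^n\in\mathcal{B}_n]\le\delta_n\eqdef\psi^2/(n\bar v^2(1-\tfrac1{2\pi})^2)$, which is exactly the budget reserved above; for $y_1^n\notin\mathcal{B}_n$ the estimate on $\alpha_n$ becomes the uniform bound $\sqrt n\,\alpha_n(y_1^n)\le C_1\eqdef 2\bar v^{-1/2}+24m_3(2\pi)^{3/2}\bar v^{-3/2}$, so that $\log\gamma_n(y_1^n)\le\log\beta_n-\tfrac12\log n+\log C_1$, which combined with the displayed bound on $\log\beta_n$ yields $\log\gamma_n(y_1^n)\le nR$ with $R$ the right-hand side of~(\ref{eq:iidachieve}) and $C\eqdef\log C_1+B$. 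To finish I would bound $\PP[\ell(f_n^*(X_1^n|Y_1^n))>nR]$ by $\PP[A^c]+\PP[Y_1^n\in\mathcal{B}_n]+\PP[A,\,Y_1^n\notin\mathcal{B}_n,\,\ell(f_n^*(X_1^n|Y_1^n))>nR]$: the last term vanishes since on $A\cap\{Y_1^n\notin\mathcal{B}_n\}$ we have shown $\ell(f_n^*(X_1^n|Y_1^n))\le\log\gamma_n(Y_1^n)\le nR$, the second term is $\le\delta_n$, and the first is $\le\epsilon-\delta_n$ by the choice of $\beta_n$, so $\PP[\ell(f_n^*(X_1^n|Y_1^n))>nR]\le\epsilon$ and hence $R^*(n,\epsilon)\le R$ by Definition~\ref{rnstardef}. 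The several validity requirements — $\epsilon-\delta_n>0$, legitimacy of the Taylor step for $\Phi^{-1}$, and the conditional Berry-Ess\'{e}en being meaningful on $\mathcal{B}_n^c$ — combine into a single explicit lower bound on $n$ of the shape~(\ref{nthrescondrateub}). Carrying the constants through is routine; the conceptual core is the $\bar v$-versus-$\sigma^2$ dichotomy, which forces the split over $\mathcal{B}_n$ and is exactly what makes the pair-based dispersion differ from the reference-based one.
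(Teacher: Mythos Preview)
Your proposal is correct and follows essentially the same architecture as the paper: both use the upper bound of Theorem~\ref{lemmalength} as the main approximation tool, apply a conditional Berry--Ess\'een bound to control the resulting conditional expectation, split according to whether $\sigma_n^2(Y_1^n)\ge\bar v/(2\pi)$ (handling the complement by Chebyshev with the same $\delta_n=\psi^2/[n\bar v^2(1-\tfrac{1}{2\pi})^2]$), and invoke the joint Berry--Ess\'een bound for $\sum_i(-\log P(X_i|Y_i))$ to fix the $Q^{-1}(\epsilon)$ term.

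The one tactical difference is in how the conditional expectation is turned into a $1/\sqrt n$ factor. You re-use the integration-by-parts calculation~(\ref{anfinalbound}) from the proof of Theorem~\ref{quenchedupper} after fixing a single threshold $\beta_n$ at the joint level; this is legitimate because the bound~(\ref{anfinalbound}) on $\alpha_n$ does not depend on where $\log\beta_n$ sits relative to the conditional mean $\sum_iH(X|y_i)$, only on $\sigma_n(y_1^n)$. The paper instead works directly with the conditional expectation inside the probability, decomposing it over shells $\{\sqrt n\,\sigma_n S_n-k-1<\sqrt n\,\sigma_n\bar S_n\le\sqrt n\,\sigma_n S_n-k\}$, applying Berry--Ess\'een twice per shell in the style of \cite[Lemma~47]{PPV:10}, and summing the geometric series; the outcome is then reassembled as a tail event for the full normalised sum $U_n=\frac{1}{\sigma\sqrt n}\sum_i[-\log P(X_i|Y_i)-H(X|Y)]$, to which the joint Berry--Ess\'een bound is applied directly. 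Both routes yield the same $-\frac{\log n}{2n}$ third-order term and $O(1/n)$ remainder. Your organisation is arguably cleaner in that it recycles the reference-based calculation verbatim, at the cost of having to allocate the $\delta_n$ budget up front and carry the shift $Q^{-1}(\epsilon-\delta_n)-Q^{-1}(\epsilon)=O(\delta_n)$ through the Taylor step; the paper's version avoids this bookkeeping by never introducing a deterministic threshold, but pays with the shell decomposition.
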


As in Section~\ref{memorylessnormalsection},
we note that,
in view 
of Theorem~\ref{prefixpenaltytheorem},
the result of Theorem~\ref{IIDAnnealedAchievability}
remains true for prefix-free codes, 
with $R_p^*(n,\epsilon)$ and
$C+1$ in place
$R^*(n,\epsilon)$ and $C$, respectively.

\begin{proof}
We will use the achievability part of
Theorem~\ref{lemmalength}.
For each $i$, take $X_i$ and $\bar{X}_i$ to be conditionally independent 
versions of $X$ given $Y=Y_i$, i.e., assume that
$P(X_i,\bar{X}_i|Y_i)=P(X_i|Y_i)P(\bar{X}_i|Y_i)$,
where the pair $(\bar{X}_i,Y_i)$ has the same distribution
as $(X_i,Y_i)$.
Define,
\begin{align*}
S_n
&= \frac{1}{\sigma_n(Y_1^n)\sqrt{n}}
\sum_{i=1}^{n}{[-\log{P(X_i|Y_i) - \hat{H}_X(Y_i)]}},\\
\bar{S}_n 
&= \frac{1}{\sigma_n(Y_1^n)\sqrt{n}}
\sum_{i=1}^{n}{[-\log{P(\bar{X}_i|Y_i) - \hat{H}_X(Y_i)]}},\\
T_n
& = \frac{1}{\sqrt{n}}
\sum_{i=1}^{n}{[\hat{H}_X(Y_i) - H(X|Y)]},
\end{align*}
where $\sigma^2_n(Y_1^n)$, $\hat{H}_X(Y)$
are defined in~(\ref{eq:sndef}) and~(\ref{eq:Hhatdef}), respectively.
For any $K>0$ the upper bound in 
Theorem~\ref{lemmalength} gives,
\begin{align*}
&
	\BBP\left[\ell(f_n^*(X_1^n|Y_1^n)) > K\right]\\
&
	\leq \BBP\left[
	\E\left(\left. \frac{1}{P(\bar{X_1^n}|Y_1^n)}
	\mathbb{I}_{ \{ P(\bar{X}_1^n|Y_1^n) 
	\geq P(X_1^n|Y_1^n)\}}\right|X_1^n,Y_1^n\right) > 2^{K} \right]\\
&  
	=    \BBP \left[
	\E\left(\left. 2^{\sqrt{n}\sigma_n(Y_1^n)\bar{S}_n}
	\mathbb{I}_{\{\bar{S}_n \leq S_n\}}
	\right|X_1^n,Y_1^n\right) 
	> 2^{K - \sum_{i=1}^n \hat{H}_X(Y_i)}\right],
\end{align*}
and taking
$K=K_n=nH(X|Y) + \sigma(X|Y)\sqrt{n}Q^{-1}(\epsilon) -\log{\sqrt{n}} + C$,
\begin{align}
&
	\BBP\left[\ell(f_n^*(X_1^n|Y_1^n)) > K_n\right]
	\nonumber\\
&
	\leq \BBP\left[
	\E\left(\left.
	2^{\sqrt{n}\sigma_n(Y_1^n)(\bar{S}_n - Q^{-1}(\epsilon))}
	\mathbb{I}_{\{\bar{S}_n \leq S_n\}}
	\right|X_1^n,Y_1^n\right) 
	> \frac{1}{\sqrt{n}}
	2^{-\sqrt{n}
	\bigl[T_n-(\sigma-\sigma_n(Y_1^n))Q^{-1}(\epsilon)\bigr] + C}\right].
	\label{eq:firstp}
\end{align}
For the conditional expectation, writing $\sigma_n$
for $\sigma_n(Y_1^n)$ for clarity, we have, 
\begin{align}
&
	\E\left.\left(2^{\sqrt{n}\sigma_n(\bar{S}_n-Q^{-1}(\epsilon))}
	\mathbb{I}_{\{\bar{S}_n \leq S_n\}}
	\right|X_1^n,Y_1^n\right) 
	\nonumber\\
&
	= \sum_{k=0}^{\infty}
	\E\left.\left(
	2^{\sqrt{n}\sigma_n(\bar{S}_n-Q^{-1}(\epsilon))}
	\mathbb{I}_{\{
	\sqrt{n}\sigma_nS_n-k-1
	<
	\sqrt{n}\sigma_n\bar{S}_n 
	\leq \sqrt{n}\sigma_nS_n-k
	\}}
	\right|X_1^n,Y_1^n\right) 
	\nonumber\\
&
	\leq 
	2^{\sqrt{n}\sigma_n(S_n-Q^{-1}(\epsilon))}
	\sum_{k=0}^{\infty}
	2^{-k}
	\BBP\left.\left[
	\sqrt{n}\sigma_nS_n-k-1
	<
	\sqrt{n}\sigma_n\bar{S}_n 
	\leq \sqrt{n}\sigma_nS_n-k
	\right|X_1^n,Y_1^n\right].
	\nonumber\\
&
	\leq 
	2^{\sqrt{n}\sigma_n(S_n-Q^{-1}(\epsilon))}
	\sum_{k=0}^{\infty}
	2^{-k}\Bigg\{
	\BBP\left.\left[
	\bar{S}_n 
	\leq 
	S_n-
	\frac{k}{\sqrt{n}\sigma_n}
	\right|X_1^n,Y_1^n\right]
	\label{eq:ink1}\\
&
	\hspace{2.2in}
	-\BBP\left.\left[
	\bar{S}_n 
	\leq 
	S_n-\frac{(k+1)}{\sqrt{n}\sigma_n}
	\right|X_1^n,Y_1^n\right]
	\Bigg\}.
	\label{eq:ink2}
\end{align}
Now following Lemma~47 of~\cite{PPV:10}, we note
that, conditional on $(X_1^n,Y_1^n)$, the only randomness
in the probabilities in~(\ref{eq:ink1}) and~(\ref{eq:ink2})
is in $\bar{S}_n$
via $\bar{X}_1^n$, so that we can apply the Berry-Ess\'een 
bound~\cite{fellerII:book}
twice to bound their difference,
resulting in,
\begin{align}
&
	2^{\sqrt{n}\sigma_n(S_n-Q^{-1}(\epsilon))}
	\sum_{k=0}^{\infty}
	2^{-k}
	\left[
	\Phi\left(S_n-\frac{k}{\sqrt{n}\sigma_n}\right)
	-\Phi\left(S_n-\frac{(k+1)}{\sqrt{n}\sigma_n}\right)
	+\frac{12m_3}{\sqrt{n}\sigma_n^3}
	\right]
	\nonumber\\
&
	\leq \frac{1}{\sqrt{n}}2^{\sqrt{n}\sigma_n(S_n 
	- Q^{-1}(\epsilon))}
	\Bigl( \frac{2}{\sqrt{2\pi}\sigma_n} 
	+ \frac{24m_3}{\sigma_n^3}\Bigr),
	\label{eq:taylor1}
\end{align}
where we have used a simple first-order Taylor 
expansion for $\Phi$, noting that $\phi(z)\leq 1/\sqrt{2\pi}$
for all $z$, and summed the geometric series.

Hence, combining all the estimates in~(\ref{eq:firstp})--(\ref{eq:taylor1}),
$$
	\BBP\left[\ell(f_n^*(X_1^n|Y_1^n)) > K_n\right]
\leq 
	\BBP\left[
	2^{
	\sqrt{n}\sigma(U_n
	-Q^{-1}(\epsilon))}
	\Bigl( \frac{2}{\sqrt{2\pi}\sigma_n} 
	+ \frac{24m_3}{\sigma_n^3}\Bigr)
	> 
	2^C \right],
$$
where we have defined,
$$U_n=
	\frac{1}{\sigma}[\sigma_n S_n+T_n] =
	 \frac{1}{\sigma\sqrt{n}}
	\sum_{i=1}^{n}{[-\log{P(X_i|Y_i) - H(X|Y)]}}.
$$
And we can further bound,
\begin{align}
&
	\BBP\left[\ell(f_n^*(X_1^n|Y_1^n)) > K_n\right]
	\nonumber\\
&\leq
	\BBP\left[
	2^{
	\sigma\sqrt{n}(U_n
	-Q^{-1}(\epsilon))}
	\Bigl( \frac{2}{\bar{v}^{1/2}} 
	+ \frac{24(2\pi)^{3/2}m_3}{\bar{v}^{3/2}}\Bigr)
	> 
	2^C \right]
+\BBP\left[\sigma^2_n(Y_1^n)<\frac{\bar{v}}{2\pi}\right].
\label{eq:twop}
\end{align}

For the first probability in~(\ref{eq:twop}) we have,
\begin{align*}
	\BBP\left[
	2^{
	\sigma\sqrt{n}(U_n
	-Q^{-1}(\epsilon))}
	2^{C-B}
	> 
	2^C \right]
&= 
	\BBP\left[U_n > Q^{-1}(\epsilon) 
	+ \frac{B}
	{\sigma\sqrt{n}}
	\right] \\
&\leq 
	Q\left(Q^{-1}(\epsilon) 
	+ \frac{B}
	{\sigma\sqrt{n}}
	\right) 
	+\frac{\E{[|-\log{P(X|Y)}-H(X|Y)|^3]}}{2\sigma^3\sqrt{n}},
\end{align*}
where we used the Berry-Ess\'{e}en bound~\cite{petrov-book:95}
for the normalised partial sum $U_n$ of the i.i.d.\ random variables 
$\{-\log{P(X_i|Y_i)}\}$ 
with mean $H(X|Y)$ and variance $\sigma^2$. 
And a second-order Taylor expansion of $Q$,
using the fact that,
$0 \leq Q''(x) = x\phi(x) \leq \frac{1}{\sqrt{2\pi e}},$
for all $x \geq 0$, gives,
\begin{align}
	\BBP\Big[
	2^{
	\sigma\sqrt{n}(U_n
	-Q^{-1}(\epsilon))}
&
	\;
	2^{C-B}
	\; 
	>2^C \Big]
	\nonumber\\
& 
	\leq
	\epsilon
	-B\phi(Q^{-1}(\epsilon))
	+\frac{B^2}{2\sqrt{2\pi e}}
	+\frac{\E{[|-\log{P(X|Y)}-H(X|Y)|^3]}}{2\sigma^3\sqrt{n}}.
\label{eq:midBE}
\end{align}
For the second probability in~(\ref{eq:twop}), a simple
application of Chebyshev's inequality gives,
\be
\BBP\left[\frac{1}{n}\sum_{i=1}^nV(Y_i)<\frac{\bar{v}}{2\pi}\right]
\leq\frac{\psi^2}{n(1-\frac{1}{2\pi})^2\bar{v}^2}.
\label{eq:cheb}
\ee
After substituting the bounds~(\ref{eq:midBE}) and~(\ref{eq:cheb})
in~(\ref{eq:twop}), simple algebra shows that,
for all $n$ satisfying~(\ref{nthrescondrateub}),
the probability is $\leq\epsilon$, completing the proof.
\end{proof}

\medskip

Next we prove a corresponding converse bound.
Once again we observe that, 
by the definitions in Section~\ref{descriptionofanoptimal},
Theorem~\ref{IIDAnnealedConverse}
also holds for $R_p^*(n,\epsilon)$ in the case
of prefix-free codes.

\begin{theorem}[Pair-based converse]
\label{IIDAnnealedConverse}
Let $(\Xp,\Yp)$ be an i.i.d.\ source-side
information pair, with conditional varentropy rate
$\sigma^2=\sigma^2(X|Y)>0$.
For any $0 < \epsilon < \frac{1}{2},$
the pair-based optimal compression rate satisfies,
\begin{equation*}
R^*(n,\epsilon) \geq H(X|Y) 
+ \frac{\sigma(X|Y)}{\sqrt{n}} Q^{-1}(\epsilon) -\frac{\log{n}}{2n} 
- \frac{C'}{n},
\end{equation*}
for all,
\begin{equation}\label{AnConvLBThres}
n > \frac{C'^2}{4(Q^{-1}(\epsilon))^2\sigma^2},
\end{equation} 
where,
$$
C' = \frac{\E{[|-\log{P(X|Y)}-H(X|Y)|^3]} + 2\sigma^3}
{2\sigma^2\phi(Q^{-1}(\epsilon))}.
$$
\end{theorem}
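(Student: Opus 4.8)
The plan is to follow the blueprint of the reference-based converse, Theorem~\ref{quenchedconverse}, taking advantage of two simplifications afforded by the i.i.d.\ assumption. Since $(\Xp,\Yp)$ is i.i.d., the conditional information density factorises, $-\log P(X_1^n|Y_1^n)=\sum_{i=1}^n\bigl[-\log P(X_i|Y_i)\bigr]$, so it is a sum of i.i.d.\ random variables with mean $H(X|Y)$, variance $\sigma^2=\sigma^2(X|Y)>0$, and finite third absolute central moment $B_3:=\E\bigl[|-\log P(X|Y)-H(X|Y)|^3\bigr]$; in particular, no conditioning on the side information is needed, and the argument is in fact cleaner than the achievability direction. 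The only analytic inputs are the general single-shot converse of Theorem~\ref{generalconvSI} and one application of the i.i.d.\ Berry--Ess\'een theorem \cite{petrov-book:95}, followed by a first-order Taylor estimate for~$Q$.

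Concretely, I would apply Theorem~\ref{generalconvSI} at blocklength $n$, that is, with $(X_1^n,Y_1^n)$ in the role of $(X,Y)$, choosing in the supremum the value $\tau=\tfrac12\log n$, so that $2^{-\tau}=1/\sqrt n$. This gives, for every integer $k\ge 0$,
\[
\PP\bigl[\ell(f_n^*(X_1^n|Y_1^n))\ge k\bigr]\ \ge\ \PP\Bigl[-\log P(X_1^n|Y_1^n)\ge k+\tfrac12\log n\Bigr]-\tfrac{1}{\sqrt n}.
\]
Next I would fix $k=k_n$ by the requirement $k_n+\tfrac12\log n=nH(X|Y)+\sigma\sqrt n\,t_n$ with $t_n:=Q^{-1}(\epsilon)-C'/(\sigma\sqrt n)$, which is exactly the statement that $k_n/n$ equals the claimed lower bound $H(X|Y)+\tfrac{\sigma}{\sqrt n}Q^{-1}(\epsilon)-\tfrac{\log n}{2n}-\tfrac{C'}{n}$. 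Applying Berry--Ess\'een (with the universal constant bounded by $\tfrac12$) to the normalised i.i.d.\ partial sum yields $\PP[-\log P(X_1^n|Y_1^n)\ge k_n+\tfrac12\log n]\ge Q(t_n)-B_3/(2\sigma^3\sqrt n)$, and hence
\[
\PP\bigl[\ell(f_n^*(X_1^n|Y_1^n))\ge k_n\bigr]\ \ge\ Q\!\Bigl(Q^{-1}(\epsilon)-\tfrac{C'}{\sigma\sqrt n}\Bigr)-\tfrac{B_3+2\sigma^3}{2\sigma^3\sqrt n}.
\]
The blocklength threshold~(\ref{AnConvLBThres}) is precisely the condition $C'/(\sigma\sqrt n)<2Q^{-1}(\epsilon)$ under which the first-order estimate $Q\bigl(Q^{-1}(\epsilon)-\Delta\bigr)\ge\epsilon+\Delta\,\phi(Q^{-1}(\epsilon))$ (the analogue of~(\ref{QineqConv})) holds with $\Delta=C'/(\sigma\sqrt n)$; and the constant $C'$ has been engineered so that $C'\phi(Q^{-1}(\epsilon))/(\sigma\sqrt n)=(B_3+2\sigma^3)/(2\sigma^3\sqrt n)$, so that the two residual terms cancel and we obtain $\PP[\ell(f_n^*(X_1^n|Y_1^n))\ge k_n]\ge\epsilon$, i.e.\ $\epsilon^*(n,k_n)\ge\epsilon$. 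From this and the equivalence~(\ref{rnepsilonstar2}) between the optimal rate and the optimal excess-rate probability, the claimed bound $R^*(n,\epsilon)\ge k_n/n$ follows, and since $R_p^*(n,\epsilon)\ge R^*(n,\epsilon)$ the same lower bound holds for prefix-free codes.

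I do not expect a genuine obstacle: in the converse direction we only need the deterministic single-shot estimate of Theorem~\ref{generalconvSI}, and not the delicate conditioning arguments that complicated the pair-based achievability proof. The points requiring care are of bookkeeping type: replacing $k_n$ by a nearby integer and absorbing the resulting $O(1/n)$ slack; upgrading $\epsilon^*(n,k_n)\ge\epsilon$ to the strict inequality that is needed to conclude $R^*(n,\epsilon)\ge k_n/n$ from~(\ref{rnepsilonstar2}), handled exactly as in the proof of Theorem~\ref{quenchedconverse}; and verifying that the choice $\tau=\tfrac12\log n$, which produces the third-order term $-\tfrac{\log n}{2n}$, is the one that balances against the Berry--Ess\'een and Taylor residuals absorbed into $C'$.
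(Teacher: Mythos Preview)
Your proposal is correct and follows essentially the same approach as the paper: apply the general single-shot converse Theorem~\ref{generalconvSI} with $\tau=\tfrac12\log n$, combine with the i.i.d.\ Berry--Ess\'een bound (constant $\tfrac12$) for the normalised information density, and use the first-order Taylor estimate~(\ref{QineqConv}) for $Q$, with $C'$ chosen so that the Berry--Ess\'een and $2^{-\tau}$ residuals cancel against the Taylor gain. The paper presents the steps in a slightly different order (Berry--Ess\'een first, then Theorem~\ref{generalconvSI}) and, like your proposal, glosses over the integer rounding of $k_n$ and the passage from $\epsilon^*(n,k_n)\ge\epsilon$ to the strict inequality in~(\ref{rnepsilonstar2}); otherwise the arguments coincide.
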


\begin{proof}
Using the Berry-Ess\'{e}en bound~\cite{fellerII:book}
we have,
\begin{align*}
&\PP\biggl[ -\log{P(X_{1}^{n}|Y_{1}^{n})} 
\geq nH(X|Y) + \sqrt{n}\sigma Q^{-1}(\epsilon) - C'\biggr] \\
&= \PP\biggl[ \frac{1}{\sigma\sqrt{n}}
\sum_{i=1}^{n}{[-\log{P(X_{i}|Y_{i})} - H(X|Y)]}
\geq Q^{-1}(\epsilon) - \frac{C'}{\sigma\sqrt{n}}\biggr] \\ 
&\geq 
 Q\left(Q^{-1}(\epsilon)-\frac{C'}{\sigma\sqrt{n}}\right) 
- \frac{\E{|-\log{P(X|Y)}-H(X|Y)|^3}}{2\sigma^3\sqrt{n}},
\end{align*}
and using the simple earlier bound (\ref{QineqConv}), 
noting that~(\ref{AnConvLBThres}) implies that
the condition in (\ref{QineqConv}) is satisfied,
\begin{align*}
&\PP\biggl[ -\log{P(X_{1}^{n}|Y_{1}^{n})} 
\geq nH(X|Y) + \sqrt{n}\sigma Q^{-1}(\epsilon) - C'\biggr] \\
&\geq  \epsilon + \frac{C'}{\sigma\sqrt{n}}
\phi(Q^{-1}(\epsilon)) 
- \frac{\E{|-\log{P(X|Y)}-H(X|Y)|^3}}{2\sigma^3\sqrt{n}} \\ 
&= \epsilon + \frac{1}{\sqrt{n}},
\end{align*}
Now applying the general converse result
in Theorem \ref{generalconvSI} 
with $\tau=\tau_n = \frac{1}{2}\log{n}$ and $X_{1}^{n}$ in place
of $X$,
\begin{align*}
&\PP\Bigl[ \ell(f_n^{*}(X_{1}^{n}|Y_{1}^{n})) \geq  
nH(X|Y) + \sqrt{n}\sigma Q^{-1}(\epsilon) - C' -\frac{\log{n}}{2} \Bigr] \\
&\geq \PP\biggl[ -\log{P(X_{1}^{n}|Y_{1}^{n})} \geq nH(X|Y) + \sqrt{n}\sigma Q^{-1}(\epsilon) - C' \biggr] - \frac{1}{\sqrt{n}}
\geq \epsilon,
\end{align*}
and the claimed bound follows.
\end{proof}

\section{Normal Approximation for Markov Sources}
\label{normalmarkovsection}

In this section we consider extensions of the normal 
approximation bounds for the optimal rate
in Sections~\ref{memorylessnormalsection}
and~\ref{annealedsectionmemoryless},
to the case of Markov sources. Note that 
the results of Section~\ref{memorylessnormalsection}
for the reference-based optimal rate $R^*(n,\epsilon|y_1^n)$
apply not only to the case of i.i.d.\ source-side
information pairs $(\Xp,\Yp)$, but much more generally
to arbitrary side-information sources $\Yp$ as long
as $\Xp$ is conditionally i.i.d.\ given $\Yp$. This 
is a broad class including, 
among others, all hidden Markov models $\Xp$. For this
reason, we restrict our attention here
to the pair-based optimal rate $R^*(n,\epsilon)$.

As discussed in the context of  compression
without side information \cite{kontoyiannis-verdu:14},
the Berry-Ess\'{e}en bound for Markov chains
is not known to hold at the
same level of generality as in the i.i.d.\ case. In fact,
even for restricted class of reversible chains 
where an explicit Berry-Ess\'{e}en bound is known
\cite{mann:phd},
it involves constants that are larger than 
those in the i.i.d.\ case by more than
four orders of magnitude, making any resulting
bounds significantly
less relevant in practice.

Therefore, in the Markov case we employ a general result
of Nagaev \cite{nagaev:61} that does not lead to explicit 
values for the relevant constants, but which applies to
all ergodic Markov chains. For similar reasons, 
rather than attempting to generalise the rather involved
proof of the achievability result in Theorem~\ref{IIDAnnealedAchievability},
we choose to illustrate a much simpler argument that
leads to a weaker bound, not containing the
third-order $(\log n)/2n$ term as in~(\ref{eq:iidachieve}).

\medskip

Throughout this section we consider
a source-side information pair
$(\Xp,\Yp)$ which is an irreducible and aperiodic, 
$d$th order Markov
chain, with 
conditional varentropy rate
$\sigma^2(\Xp|\Yp)$ as in Lemma~\ref{lem:varentropy},
and we also assume that the side information 
process $\Yp$ itself is an irreducible and aperiodic, 
$d$th order Markov chain.
Note that we allow $(\Xp,\Yp)$ to have an
arbitrary initial distribution, so that in 
particular we do not assume it is stationary.
The main probabilistic tool we will need in the proof
of Theorem~\ref{Markovboth}
is the following normal approximation
bound for the conditional information density
$-\log P(X_1^n|Y_1^n)$;
it is proved in Appendix~\ref{s:bemarkovlogp}.

\begin{theorem}
[Berry-Ess\'een bound for the conditional information density]
\label{bemarkovlogp}
Suppose the source-side information pair $(\Xp,\Yp)$ 
and the side information process $\Yp$ are
$d$th order, irreducible and aperiodic Markov chains,
with conditional entropy rate $H=H(\Xp|\Yp)$ and conditional varentropy rate
$\sigma^2=\sigma^2(\Xp|\Yp)>0.$
Then
there exists a finite constant $A > 0$ such that, for all $n\geq 1$,
\begin{equation*}
\sup_{z \in \mathbb{R}}
\left|\BBP\left[
\frac{-\log{P(X_1^n|Y_1^n)} -nH}
{\sigma\sqrt{n}}>z
\right] 
 - Q(z)\right|
\leq \frac{A}{\sqrt{n}},
\end{equation*} 
\end{theorem}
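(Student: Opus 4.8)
The plan is to reduce the statement to a Berry–Ess\'een bound for a partial sum of a function of the Markov chain $(\Xp,\Yp)$, and then invoke Nagaev's theorem \cite{nagaev:61}. The key observation is that, for a $d$th order Markov pair $(\Xp,\Yp)$ whose side information component $\Yp$ is itself a $d$th order Markov chain, the conditional information density decomposes telescopically: writing $W_i=(X_{i-d+1}^i,Y_{i-d+1}^i)$ for the lifted first-order chain (with the obvious modification for small $i$), one has
$$
-\log P(X_1^n|Y_1^n)
=-\log\frac{P(X_1^n,Y_1^n)}{P(Y_1^n)}
=\sum_{i=1}^n g(W_{i-1},W_i)+R_n,
$$
where $g$ is a bounded function capturing the one-step conditional log-likelihood increment $-\log P(X_i\mid X_{i-d}^{i-1},Y_{i-d}^i)$ and $R_n$ is a boundary term depending only on the first and last $O(d)$ coordinates, hence $|R_n|\le \Const$ uniformly in $n$. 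The fact that $\Yp$ is Markov of the same order is exactly what makes $g$ depend only on the lifted state $(W_{i-1},W_i)$ rather than on the whole past.

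First I would make the decomposition above precise, identifying $g$ and bounding $R_n$; this requires only bookkeeping with conditional PMFs and the finiteness of the alphabets $\clX,\clY$. Next I would note that $\{W_i\}$ is an irreducible, aperiodic, finite-state Markov chain, so by Lemma~\ref{lem:varentropy} (or its proof) the normalised partial sum $S_n=\frac{1}{\sigma\sqrt n}\big(\sum_{i=1}^n g(W_{i-1},W_i)-nH\big)$ satisfies $\E[S_n]\to 0$ and $\VAR(S_n)\to 1$, with $\sigma^2=\sigma^2(\Xp|\Yp)>0$ being exactly the asymptotic variance of the additive functional. Then I would apply Nagaev's Berry–Ess\'een theorem for functions of finite ergodic Markov chains: since $g$ is bounded and the chain is finite and ergodic, Nagaev's hypotheses (a spectral gap for the transition operator, which holds automatically here) are met, yielding a finite constant $A_0>0$ with
$$
\sup_{z\in\RL}\Big|\BBP\Big[\frac{\sum_{i=1}^n g(W_{i-1},W_i)-nH}{\sigma\sqrt n}>z\Big]-Q(z)\Big|\le\frac{A_0}{\sqrt n}.
$$
Finally I would absorb the bounded remainder $R_n$ and any discrepancy between $-\log P(X_1^n|Y_1^n)$ and $\sum g(W_{i-1},W_i)$ by a standard shift argument: replacing $z$ by $z\pm\Const/(\sigma\sqrt n)$ changes $Q(z)$ by at most $\Const'/\sqrt n$ (since $\|Q'\|_\infty=\|\phi\|_\infty=1/\sqrt{2\pi}$), so the two tail functions differ uniformly by $O(1/\sqrt n)$; combining with the display above gives the claimed bound with a suitable finite $A$.

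The main obstacle is the invocation of Nagaev's theorem in precisely the right form: one needs a Berry–Ess\'een rate $O(n^{-1/2})$ for additive functionals of a (possibly non-stationary) ergodic finite-state chain, with the centering at $nH$ rather than at the true mean $n\E_\pi[g]+O(1)$, and with the normalization by the true asymptotic variance $\sigma^2$. The non-stationarity is handled by the uniform boundedness of the difference between the actual mean of the partial sum and $nH$ (again a boundary effect of size $O(d)$), which is reabsorbed by the same shift-by-$O(1/\sqrt n)$ argument; and the identification of $\sigma^2$ with the variance appearing in Nagaev's CLT is exactly the content of Lemma~\ref{lem:varentropy}. Once these identifications are pinned down, the rest is routine. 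The details of this argument, including the explicit form of $g$ and the verification of Nagaev's conditions, are carried out in Appendix~\ref{s:bemarkovlogp}.
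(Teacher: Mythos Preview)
Your proposal is correct and follows essentially the same route as the paper: decompose $-\log P(X_1^n|Y_1^n)$ as a sum of a bounded function of the lifted first-order chain plus a uniformly bounded boundary term (this is exactly the decomposition~(\ref{logpfplusdelta}) from the proof of Lemma~\ref{lem:varentropy}), apply Nagaev's Berry--Ess\'een theorem to the partial sum, and absorb the $O(1)$ remainder via the shift $z\mapsto z\pm\delta/(\sigma\sqrt n)$ together with the Lipschitz bound $\|\phi\|_\infty=1/\sqrt{2\pi}$ on $Q$. One small caveat: your informal identification of $g$ as $-\log P(X_i\mid X_{i-d}^{i-1},Y_{i-d}^{i})$ is not quite right in general---the correct one-step increment is $-\log P(X_i,Y_i\mid X_{i-d}^{i-1},Y_{i-d}^{i-1})+\log P(Y_i\mid Y_{i-d}^{i-1})$, which coincides with your expression only when $Y_i\perp X_{i-d}^{i-1}\mid Y_{i-d}^{i-1}$---but since you explicitly defer the precise identification of $g$ to the detailed argument, this does not affect the validity of the plan.
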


\begin{theorem}
[Normal approximation for Markov sources]
\label{Markovboth}
Suppose the source-side information pair $(\Xp,\Yp)$ 
and the side information process $\Yp$ are
$d$th order, irreducible and aperiodic Markov chains,
with conditional entropy rate $H=H(\Xp|\Yp)$ and conditional varentropy rate
$\sigma^2=\sigma^2(\Xp|\Yp)>0.$
Then, for any $\epsilon\in(0,1/2)$,
there are finite constants $C_m,C_m'$ and integers $N,N'$
such that,
\begin{equation}
R^*(n,\epsilon) \leq H(\Xp|\Yp) 
+ \frac{\sigma(\Xp|\Yp)}{\sqrt{n}}Q^{-1}(\epsilon) 
+ \frac{C_m}{n},
\qquad\mbox{for all}\;n\geq N,
\label{eq:markovU}
\end{equation}
and,
\begin{equation}
R^*(n,\epsilon) \geq H(\Xp|\Yp) 
+ \frac{\sigma(\Xp|\Yp)}{\sqrt{n}}Q^{-1}(\epsilon) 
- \frac{\log{n}}{2n} - \frac{C_m'}{n},
\qquad\mbox{for all}\;n\geq N'.
\label{eq:markovL}
\end{equation}
\end{theorem}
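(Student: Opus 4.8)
The plan is to run the two i.i.d.\ arguments of Section~\ref{annealedsectionmemoryless} essentially unchanged, the one genuinely new input being that every appeal there to the classical i.i.d.\ Berry-Ess\'{e}en bound is replaced by the Markov Berry-Ess\'{e}en bound for the conditional information density supplied by Theorem~\ref{bemarkovlogp}. Write $H=H(\Xp|\Yp)$, $\sigma=\sigma(\Xp|\Yp)$ (these limits exist by Lemma~\ref{lem:varentropy}), let $A$ be the constant furnished by Theorem~\ref{bemarkovlogp}, and write $-\log P$ as shorthand for the conditional information density $-\log P(X_1^n|Y_1^n)$ of blocklength $n$.

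For the converse~(\ref{eq:markovL}) I would reproduce the proof of Theorem~\ref{IIDAnnealedConverse}. Fix an explicit constant $C_m'>0$ of order $\sigma(A+1)/\phi(Q^{-1}(\epsilon))$ and set $K_n = nH + \sqrt{n}\,\sigma Q^{-1}(\epsilon) - C_m'$. Theorem~\ref{bemarkovlogp} gives
\[
\PP\bigl[-\log P \geq K_n\bigr]
\;\geq\; Q\Bigl(Q^{-1}(\epsilon) - \tfrac{C_m'}{\sigma\sqrt n}\Bigr) - \frac{A}{\sqrt n},
\]
and, once $n$ exceeds a threshold $N'$ of order $(C_m')^2/\bigl((Q^{-1}(\epsilon))^2\sigma^2\bigr)$ — which is exactly the condition making the hypothesis of the elementary inequality~(\ref{QineqConv}) valid, just as~(\ref{AnConvLBThres}) does for Theorem~\ref{IIDAnnealedConverse} — the right-hand side is at least $\epsilon + \tfrac{C_m'}{\sigma\sqrt n}\phi(Q^{-1}(\epsilon)) - \tfrac{A}{\sqrt n}$, hence $\geq \epsilon + \tfrac1{\sqrt n}$ for the stated $C_m'$. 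Feeding this into the general converse Theorem~\ref{generalconvSI}, with $X_1^n$ in place of $X$ and $\tau=\tfrac12\log n$, yields
\[
\PP\bigl[\ell(f_n^*(X_1^n|Y_1^n)) \geq K_n - \tfrac12\log n\bigr] \;\geq\; \epsilon,
\]
and~(\ref{eq:markovL}) follows from the rate/excess-probability equivalence~(\ref{rnepsilonstar2}), exactly as at the end of the proof of Theorem~\ref{IIDAnnealedConverse} (the passage to an integer description length costing at most a further $1/n$, absorbed into $C_m'$).

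For achievability~(\ref{eq:markovU}) I would use only the crude pointwise bound $\ell(f_n^*(x_1^n|y_1^n)) \leq -\log P(x_1^n|y_1^n)$ from Theorem~\ref{t:achieve}; this is precisely why the result is weaker than~(\ref{eq:iidachieve}) and omits the third-order term $-(\log n)/(2n)$. Set $z_n = nH + \sqrt n\,\sigma Q^{-1}(\epsilon) + C_m$. Theorem~\ref{t:achieve} followed by Theorem~\ref{bemarkovlogp} gives
\[
\PP\bigl[\ell(f_n^*(X_1^n|Y_1^n)) > z_n\bigr]
\;\leq\; \PP\bigl[-\log P \geq z_n\bigr]
\;\leq\; Q\Bigl(Q^{-1}(\epsilon) + \tfrac{C_m}{\sigma\sqrt n}\Bigr) + \frac{A}{\sqrt n}.
\]
Writing $c_\epsilon = \phi(Q^{-1}(\epsilon)+1) > 0$, a first-order Taylor expansion of $Q$ (using $Q'=-\phi$ and the monotonicity of $\phi$ on $[0,\infty)$) shows $Q(Q^{-1}(\epsilon)+\tfrac{C_m}{\sigma\sqrt n}) \leq \epsilon - c_\epsilon\tfrac{C_m}{\sigma\sqrt n}$ whenever $\tfrac{C_m}{\sigma\sqrt n}\leq 1$. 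Choosing $C_m = \sigma A/c_\epsilon$ and $N = \lceil C_m^2/\sigma^2\rceil$ makes the last display $\leq\epsilon$ for all $n\geq N$, so $R^*(n,\epsilon)\leq z_n/n$, which is~(\ref{eq:markovU}).

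Everything above is a mechanical rerun of the i.i.d.\ proofs once Theorem~\ref{bemarkovlogp} is in hand, so I expect the real work to lie entirely in establishing that theorem (deferred to Appendix~\ref{s:bemarkovlogp}). The difficulty there is that $-\log P(X_1^n|Y_1^n)$ is an additive functional of the $d$th order chain only up to initialisation and boundary corrections, it need not be a partial sum of a stationary sequence, and the chain is not assumed stationary, so no textbook Berry-Ess\'{e}en theorem applies directly; one must control those correction terms and then invoke a general Markov normal-approximation result with an $O(1/\sqrt n)$ rate, such as Nagaev's. The remaining bookkeeping — checking the thresholds $N$, $N'$ and the integer-rounding steps in~(\ref{rnepsilonstar2}) — is routine.
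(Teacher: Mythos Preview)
Your proposal is correct and follows essentially the same route as the paper: both parts combine the general coding theorems (Theorem~\ref{t:achieve} for achievability, Theorem~\ref{generalconvSI} with $\tau=\tfrac12\log n$ for the converse) with the Markov Berry--Ess\'een bound of Theorem~\ref{bemarkovlogp}, and then finish with a Taylor expansion of $Q$. The only cosmetic difference is that for achievability the paper uses a second-order expansion (bounding $Q''\le 1/\sqrt{2\pi e}$) to obtain $C_m=2A\sigma/\phi(Q^{-1}(\epsilon))$, whereas you use a first-order expansion with the slightly cruder derivative bound $c_\epsilon=\phi(Q^{-1}(\epsilon)+1)$; both are valid and yield the same $O(1/n)$ conclusion.
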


Note that the only reason we do not give explicit
values for the constants $C_m,C_m',N,N'$ 
is because of the unspecified
constant in the Berry-Ess\'{e}en bound. In fact,
for any class of Markov chains for which the
constant $A$ in Theorem~\ref{bemarkovlogp}
is known explicitly, we can take,
$$C_m = \frac{2A\sigma}{\phi(Q^{-1}(\epsilon))},\qquad
C_m' = \frac{\sigma(A+1)}{\phi(Q^{-1}(\epsilon))},$$
and,
$$N= \frac{2A^2}{\pi e (\phi(Q^{-1}(\epsilon)))^4},\qquad
N'=
\Bigl(\frac{A+1}{Q^{-1}(\epsilon)\phi(Q^{-1}(\epsilon))}\Bigr)^2.
$$

As with the corresponding results for memoryless 
sources, Theorems~\ref{IIDAnnealedAchievability}
and~\ref{IIDAnnealedConverse},
we observe
that both~(\ref{eq:markovU}) 
and~(\ref{eq:markovL}) in 
Theorem~\ref{Markovboth}
remain valid for $R_p^*(n,\epsilon)$ in 
the case of prefix-free codes.

\begin{proof}
Let $A$ be the constant of Theorem~\ref{bemarkovlogp}.
Taking $C_m = 2A\sigma/\phi(Q^{-1}(\epsilon))$ and 
$K_n = nH + \sigma\sqrt{n}Q^{-1}(\epsilon) + C_m$,
the general achievability bound in Theorem~\ref{t:achieve}
gives,
\begin{align*}
\Pbig{\ell(f_n^*(X_1^n|Y_1^n)) \geq K_n} 
&\leq \Pbig{-\log{P(X_1^n|Y_1^n)} \geq K_n} \\
&\leq Q\Big(Q^{-1}(\epsilon) + \frac{C_m}{\sigma\sqrt{n}}\Big) 
+ \frac{A}{\sqrt{n}},
\end{align*}
where the second inequality follows from Theorem~\ref{bemarkovlogp}.
Since $Q''(x) \leq \frac{1}{\sqrt{2\pi e}}, x \geq 0$, 
a second-order Taylor expansion for $Q$ yields,
\begin{equation*}
\Pbig{\ell(f_n^*(X_1^n|Y_1^n)) \geq K_n} 
\leq \epsilon - \frac{C_m}{\sigma\sqrt{n}}\Bigl\{\phi(Q^{-1}(\epsilon)) 
- \frac{C_m}{2 \sigma\sqrt{2\pi e n}} - \frac{A\sigma}{C_m} \Bigr\} \leq \epsilon,
\end{equation*}
where the last inequality holds for all,
$n \geq 2A^2/[\pi e (\phi(Q^{-1}(\epsilon)))^4]$.
This proves~(\ref{eq:markovU}).

For the converse, taking
$C_m' = \sigma(A+1)/\phi(Q^{-1}(\epsilon))$ and 
$K_n = nH + \sigma\sqrt{n}Q^{-1}(\epsilon) - (\log{n})/2 - C_m'$,
and $\tau =(\log{n})/2$,
the general converse bound in 
Theorem~\ref{generalconvSI} gives,
\ben
\Pbig{\ell(f_n^*(X_1^n|Y_1^n)) \geq K_n} 
&\geq&
	\BBP\left[\frac{-\log P(X_1^n|Y_1^n)-nH}{\sigma\sqrt{n}}
	\geq Q^{-1}(\epsilon) - \frac{C_m'}{\sigma\sqrt{n}}\right]
	-\frac{1}{\sqrt{n}}
	\\
&\geq&
Q\Bigl(Q^{-1}(\epsilon) - \frac{A+1}{\phi(Q^{-1}(\epsilon))\sqrt{n}}\Bigr) 
- \frac{A+1}{\sqrt{n}},
\een
where the second bound follows from
Theorem~\ref{bemarkovlogp}.
Finally, using a simple first-order Taylor expansion
of $Q$, and noting that
$\phi(x)$ is nonincreasing for $x \geq 0$
and that,
$$Q^{-1}(\epsilon) - \frac{A+1}{\phi(Q^{-1}(\epsilon))\sqrt{n}} \geq 0,$$
for $\epsilon \in (0,1/2)$
and,
\begin{equation*}
n \geq \Bigl(\frac{A+1}{Q^{-1}(\epsilon)\phi(Q^{-1}(\epsilon))}\Bigr)^2,
\end{equation*}
yields that
$\Pbig{\ell(f_n^*(X_1^n|Y_1^n)) \geq K_n}>\epsilon$.
This gives~(\ref{eq:markovL}) and completes the proof.
\end{proof} 

\begin{appendices}

\appendixpage

\section{Proof of Lemma~\ref{lem:varentropy}. } 
\label{app:Mvar}
For any pair of strings $(x_1^{n+d},y_1^{n+d}) \in 
(\mathcal{X}\times\mathcal{Y})^{n+d}$ such that 
$P(x_1^{n+d},y_1^{n+d}) > 0$,
we have,
\begin{align}
-\log{P(x_1^n|y_1^n)} 
=&\; 
	\log\left(\frac{P(y_1^d)\prod_{j=d+1}^n{P(y_j|y_{j-d}^{j-1})}}
	{P(x_1^d,y_1^d)\prod_{j=d+1}^n
	{P(x_j,y_j|x_{j-d}^{j-1},y_{j-d}^{j-1})}}\right) 
	\nonumber\\ 
=&\;
	\sum_{j=d+1}^{d+n}{
	\log\left(
	\frac{P(y_j|y_{j-d}^{j-1})}{P(x_j,y_j|x_{j-d}^{j-1},y_{j-d}^{j-1})}
	\right)} 
	\nonumber\\
&\;\;
	- \log\left(\frac{P(x_1^d,y_1^d)
	\prod_{j=n+1}^{n+d}{P(x_j,y_j| x_{j-d}^{j-1},y_{j-d}^{j-1})}}
	{P(y_1^d)\prod_{j=n+1}^{n+d}{P(y_j|y_{j-d}^{j-1})}}\right)
	\nonumber\\ 
\label{logpfplusdelta}
&= \sum_{j=1}^n{f(x_j^{j+d},y_j^{j+d})} + \Delta_n,
\end{align}
where $f:(\mathcal{X}\times\mathcal{Y})^{d+1} \rightarrow \mathbb{R}$ 
is defined on,
\begin{equation*}
\mathcal{S} = \Bigl\{(x_1^{d+1},y_1^{d+1}) 
	\in (\mathcal{X}\times\mathcal{Y})^{d+1}: 
P\bigl(x_{d+1},y_{d+1}|x_1^d,y_1^{d}\bigr) > 0\Bigr\},
\end{equation*}
by,
\begin{equation*}
f(x_1^{d+1},y_1^{d+1}) 
= \log\left(\frac{P(y_{d+1}|y_{1}^{d})}{P(x_{d+1},y_{d+1}
|x_1^d,y_{1}^{d})}\right),
\end{equation*}
and,
\begin{equation*}
\Delta_n = 
\log\left(
\frac{P(y_1^d)
\prod_{j=n+1}^{n+d}P(y_j|y_{j-d}^{j-1})}
{P(x_1^d,y_1^d)\prod_{j=n+1}^{n+d}
{P(x_j,y_j|x_{j-d}^{j-1},y_{j-d}^{j-1})}}\right).
\end{equation*}
Taking the maximum of $\Delta_n$ over 
all nonzero-probability strings, gives a maximum 
of finitely many terms all of which are finite, 
so,
\begin{equation} \label{supdeltabounded}
\delta = \max{|\Delta_n|} < \infty.
\end{equation}

Let $\Zp=\seq{Z_n}$ denote the first-order Markov chain defined 
by taking overlapping $(d+1)$-blocks in the joint process,
\begin{equation*}
Z_n = ((X,Y)_n,(X,Y)_{n+1},\ldots,(X,Y)_{n+d}).
\end{equation*}
Since $(\Xp,\Yp)$ is irreducible and aperiodic, 
so is $\Zp$, so it has a unique stationary distribution
$\pi$.
Let $\seq{\tilde{Z}_n}$ denote a stationary version of $\seq{Z_n}$, 
with the same transition probabilities as $\seq{Z_n}$
and with $\tilde{Z}_1^d\sim\pi$.
And using~(\ref{logpfplusdelta})
we can express,
$$\frac{1}{n}H(X_1^n|Y_1^n)=\frac{1}{n}E[-\log P(X_1^n|Y_1^n)]
=E\left[
\frac{1}{n}\sum_{j=1}^nf(X_j^{j+d},Y_j^{j+d})
\right]
+\frac{\bar{\Delta}_n}{n},$$
where the $\bar{\Delta}_n=E(\Delta_n)$ 
are constants all absolutely bounded by $\delta<\infty$.

Then the $L^1$ ergodic theorem for Markov chains,
see, e.g., \cite[p.~88]{chung:book}, implies
that the limit,
$$H(\Xp|\Yp)=\lim_{n\to\infty} \frac{1}{n}H(X_1^n|Y_1^n),$$
exists and it equals $E[f(\tilde{Z}_1)]$,
independently of the initial distribution
of $(\Xp,\Yp)$. Similarly, we can write the variances,
\be
\frac{1}{n}\VAR(-\log P(X_1^n|Y_1^n))
&=&
	\frac{1}{n}\VAR\left(
	\sum_{j=1}^nf(X_j^{j+d},Y_j^{j+d})+\Delta_n
	\right)
	\nonumber\\
&=&
	\frac{1}{n}\E\left\{\left[
	\sum_{j=1}^nf(X_j^{j+d},Y_j^{j+d})
	-\E\left(\sum_{j=1}^nf(X_j^{j+d},Y_j^{j+d})\right)
	\right]^2
	\right\}+o(1)
	\nonumber\\
&=&
	\frac{1}{n}\E\left\{\left[
	\sum_{j=1}^nf(X_j^{j+d},Y_j^{j+d})
	-\E\left(\sum_{j=1}^nf(Z_j)\right)
	\right]^2
	\right\}
	\label{eq:L2a}\\
&&
	-
	\frac{1}{n}\left[
	\E\left(\sum_{j=1}^nf(X_j^{j+d},Y_j^{j+d})
	-f(Z_j)\right)
	\right]^2
	+o(1),
	\label{eq:L2b}
\ee
where the first step uses the
uniform boundedness of $\Delta_n$.
Then, the $L^2$ ergodic theorem
in \cite[p.~97]{chung:book}, implies
that, independently of the initial distribution
of $(\Xp,\Yp)$, the limit of the term
in~(\ref{eq:L2a}) exists and equals $\sigma^2(\Xp|\Yp)$,
and the limit of the terms in~(\ref{eq:L2b}) is zero.
\qed

\section{Proof of Theorem~\ref{bemarkovlogp}}
\label{s:bemarkovlogp}

We adopt the same setting and notation as in the proof 
of Lemma~\ref{lem:varentropy} above.

Since the function
$f$ is bounded we can apply \cite[Theorem~1]{nagaev:61} to obtain
that there exists a finite constant $A_1$ such that, for all $n \geq 1$,
\begin{equation}
\label{eq:nagaev}
\sup_{z \in \mathbb{R}}
\left|\BBP\left[
\frac{\sum_{j=1}^n{f(X_j,Y_j)} -nH}{\sigma\sqrt{n}}>z
\right] - Q(z)\right|
\leq \frac{A_1}{\sqrt{n}},
\end{equation} 
where $H$ can also be expressed
as $H= \mathbb{E}[f(\tilde{Z}_1)]$, and
since the function $f$ is bounded and the distribution of the 
chain $\seq{Z_n}$ converges to the stationary distribution 
exponentially fast, the conditional varentropy is also given by,
\begin{equation*} 
\sigma^2 = \lim_{n \rightarrow \infty}
{\frac{1}{n}\mathbb{E}\Bigl(\sum_{j=1}^n{[f(\tilde{Z}_j) - H]}\Bigr)^2},
\end{equation*}
and it coincides with the expression 
in Lemma~\ref{lem:varentropy}.

For $z\in\RL$, define,
\begin{align*}
F_n(z) &= \BBP\left[\frac{-\log{P(X_1^n|Y_1^n)} -nH}
	{\sigma\sqrt{n}}>z\right] \\
G_n(z) &= \BBP\left[\frac{\sum_{j=1}^n{f(X_j,Y_j)} -nH}
	{\sigma\sqrt{n}}>z\right].
\end{align*}
Since $F_n$ and $G_n$ are non-increasing,~(\ref{logpfplusdelta}),
(\ref{supdeltabounded}), and~(\ref{eq:nagaev}) yield,
\begin{align*}
F_n(z) \geq G_n\Big(z + \frac{\delta}{\sigma\sqrt{n}}\Big) 
\geq Q\Big(z + \frac{\delta}{\sigma\sqrt{n}}\Big) - \frac{A_1}{\sqrt{n}} 
\geq Q(z) - \frac{A}{\sqrt{n}},
\end{align*}
uniformly in $z$, where in the last inequality we used a simple
first-order Taylor expansion for $Q$, with 
$A = A_1 + \frac{\delta}{\sqrt{2\pi}}$.
Similarly, we have,
$$F_n(z) \leq G_n\Big(z - \frac{\delta}{\sigma\sqrt{n}}\Big) 
\leq Q\Big(z - \frac{\delta}{\sigma\sqrt{n}}\Big) + \frac{A_1}{\sqrt{n}} 
\leq Q(z) + \frac{A}{\sqrt{n}},
$$
uniformly in $z$.
\qed

\end{appendices}

\renewcommand{\baselinestretch}{0.9}

{\small

\bibliographystyle{plain}

\def\cprime{$'$}

}


\end{document}